\theoremstyle{plain} 
\newtheorem{theorem}{Theorem}
\newtheorem{lemma}[theorem]{Lemma}
\theoremstyle{definition}  
\newtheorem{definition}{Definition}
\newtheorem{assumptions}{Assumptions}
\newtheorem{example}{Example}
\theoremstyle{remark}  
\newtheorem{remark}{Remark}
\DeclareMathOperator{\sech}{sech}
\DeclareMathOperator{\diag}{diag}
\DeclareMathOperator{\tr}{tr}
\DeclareMathOperator{\sinc}{sinc}
\DeclareMathOperator{\sgn}{sgn}
\newcommand{\eg}{\emph{e.g.}}
\newcommand{\ie}{\emph{i.e.}}
\newcommand{\const}[1]{{\mathcal{#1}}}
\newcommand{\Reals}{\mathbb{R}}
\newcommand{\Complex}{\mathbb{C}}
\newcommand{\E}{\mathsf{E}}
\newcommand{\inners}[2]{{\langle{#1},{#2}\rangle}_s}
\newcommand{\inner}[2]{{\langle{#1},{#2}\rangle}}
\newcommand{\der}{\mathrm{d}}
\newcommand{\eqdef}{\stackrel{\Delta}{=}}
\begin{document}

\title{Information Transmission using
the Nonlinear Fourier Transform, Part I:\\Mathematical
Tools\thanks{\noindent Submitted for publication on February 16,
2012;  revised February 15, 2013; accepted March 21, 2013.
The material in this paper was presented in part at the 2013 IEEE International Symposium on 
Information Theory. The authors are with the Edward S. Rogers Sr.\ Dept.\ of
Electrical and Computer Engineering, University of Toronto,
Toronto, ON M5S 3G4, Canada. Email:
\texttt{\{mansoor,frank\}@comm.utoronto.ca.}}}

\markboth{IEEE Transactions on Information Theory}{Yousefi and
Kschischang}

\author{Mansoor~I.~Yousefi and Frank~R.~Kschischang,~\IEEEmembership{Fellow,~IEEE}}

\IEEEpubid{0000--0000/00\$00.00~\copyright~2014 IEEE}

\maketitle

\begin{abstract}
\ifCLASSOPTIONonecolumn\relax\else\boldmath\fi
The nonlinear Fourier transform (NFT), a powerful tool in
soliton theory and exactly solvable models, is a method for
solving integrable partial differential equations governing
wave propagation in certain nonlinear media.  The NFT
decorrelates signal degrees-of-freedom in such models, in
much the same way that the Fourier transform does for linear
systems.  In this three-part series of papers, this
observation is exploited for data transmission over
integrable channels such as optical fibers, where pulse
propagation is governed by the nonlinear Schr\"odinger
equation.  In this transmission scheme, which can be viewed
as a nonlinear analogue of orthogonal frequency-division
multiplexing commonly used in linear channels, information
is encoded in the nonlinear frequencies and their spectral
amplitudes.  Unlike most other fiber-optic transmission
schemes, this technique deals with both dispersion and
nonlinearity directly and unconditionally without the need
for dispersion or nonlinearity compensation methods.  This
first paper explains the mathematical tools that underlie
the method.
\end{abstract}

\begin{IEEEkeywords}
Nonlinear Fourier transform, integrable channels, Lax pairs, 
Zakharov-Shabat spectral problem, Fourier transforms, fiber-optic 
communications.
\end{IEEEkeywords}

\maketitle

\section{Introduction}
\label{sec:introduction}

\IEEEPARstart{A}{n evolution} equation is a partial differential
equation (PDE) for an unknown function $q(t,z)$ of the form
\begin{equation}
q_z = K(q),
\label{eqn:evolution}
\end{equation}
where $K(q)$ is an expression involving $q$ and its derivatives
with respect to $t$.  In (\ref{eqn:evolution}) and throughout this
paper, subscripts are used to denote partial derivatives with respect
to the corresponding variable.

The evolution equation (\ref{eqn:evolution}) is a PDE in
$1+1$ dimensions, \ie, one variable $t \in \mathbb{R}$ represents a
temporal dimension and one variable $z \geq 0$ represents a spatial
dimension.  In most mathematical literature, the roles of $z$ and $t$
are interchanged, so rather than a spatial evolution as in
(\ref{eqn:evolution}), mathematicians study temporal evolution.

An important example of an evolution equation, and the one that
motivates this work, is the stochastic nonlinear Schr\"odinger (NLS)
equation governing the complex envelope of a narrowband signal in an
optical fiber in the presence of noise, given by
\cite{zakharov1972ett,agrawal2001nfo}
\begin{IEEEeqnarray}{rCl}
\frac{\partial Q(\tau,\ell)}{\partial \ell}
  &=&\frac{j\beta_2}{2}\frac{\partial^2 Q(\tau,\ell)}{\partial \tau^2}
- j\gamma Q(\tau,\ell)|Q(\tau,\ell)|^{2} \nonumber \\
&  & +\: N(\tau,\ell), \quad 0 \leq \ell \leq\const{L}.
\label{eq:st-nls}
\end{IEEEeqnarray}
Here $\ell$ denotes distance (in km) along the fiber; the
transmitter is located at $\ell = 0$, and the receiver is located
at $\ell = \const{L}$.  The symbol $\tau$ represents retarded
time, measured in seconds, \ie, $\tau=t-\beta_1 \ell$ where $t$
is ordinary time and $\beta_1$ is a constant, and $Q(\tau,\ell)$
is the complex envelope of the signal propagating in the fiber.
The coefficient $\beta_2$, measured in ${\rm s^2/km}$, is called
the chromatic dispersion coefficient, while $\gamma$, measured in
${\rm W^{-1}km^{-1}}$, is the nonlinearity parameter.  Finally,
$N(\tau,\ell)$ is bandlimited white Gaussian noise with in-band
spectral density $\sigma^2_0$ (${\rm W/(km}\cdot{\rm Hz)}$) and
autocorrelation
\begin{IEEEeqnarray*}{rCl}
\E \left\{N(\tau,\ell)N^*(\tau^\prime,\ell^\prime)\right\}
=\sigma^2_0\delta_W(\tau-\tau^\prime)\delta(\ell-\ell^\prime),
\end{IEEEeqnarray*}
where $\delta_W(x)= 2W\sinc(2Wx)$.  It is assumed that
$Q(\tau,\ell)$ is bandlimited to $W$ for all $\ell$, $0 \leq \ell
\leq \const{L}$.

The stochastic NLS equation (\ref{eq:st-nls}) models both
chromatic dispersion (captured by the $j\beta_2\partial^2
Q/\partial \tau^2$ term), which is responsible for temporal
broadening, and the Kerr nonlinearity (captured by the $\gamma
|Q|^2 Q$ term), which is responsible for spectral broadening.
Pulse propagation is governed by the tension between these
effects and can be linearly dominated, nonlinearly dominated, or
solitonic (in which case the two effects are balanced).  The NLS
equation defines a nonlinear dispersive waveform channel from
$Q(\tau,0)$ at the transmitter to $Q(\tau,\const{L})$ at the
receiver.

\IEEEpubidadjcol

Such a nonlinear dispersive waveform channel is a major departure
from the classical additive white Gaussian noise and wireless
fading channels in terms of analytical difficulty.  Here the
signal degrees-of-freedom couple together via the nonlinearity
and dispersion in a complicated manner, making it difficult to
establish the channel input-output map, even deterministically.
Most current approaches assume a linearly-dominated regime of
operation, consider the nonlinearity as a small perturbation, or
are geared towards managing and suppressing the (detrimental)
effects of the nonlinear and dispersive terms.  In-line dispersion
management, digital backpropagation, and other forms of
electronic pre-and post-compensation belong to this class of
methods (see \cite{essiambre2010clo,
  kramer2003sec,ip2008cdn,roberts2006epo,beygi2011}
and references therein).

In this paper we adopt a different philosophy.  Rather than
treating nonlinearity and dispersion as nuisances, we seek a
transmission scheme that is fundamentally compatible with these
effects.  We effectively ``diagonalize'' the nonlinear
Schr\"odinger channel with the help of the nonlinear Fourier
transform (NFT), a powerful tool for solving \emph{integrable}
nonlinear dispersive partial differential equations
\cite{ablowitz2006sai,faddeev2007hmi}. The NFT uncovers linear
structure hidden in the one-dimensional cubic nonlinear
Schr\"odinger equation, and can be viewed as a generalization of
the (ordinary) Fourier transform to certain nonlinear systems.

With the help of the nonlinear Fourier transform, we are able to
represent a signal by its discrete and continuous nonlinear
spectra.  While the signal propagates along the fiber based on
the complicated NLS equation, the action of the channel on its
spectral components is given by simple independent
linear equations.  Just as the (ordinary) Fourier transform
converts a linear convolutional channel $y(t)=x(t)\convolution h(t)$ into a
number of parallel scalar channels, the nonlinear Fourier
transform converts a nonlinear dispersive channel described by a
\emph{Lax convolution} (see Sec.~\ref{sec:laxform}) into a number of
parallel scalar channels.  This suggests that information can be
encoded (in analogy with orthogonal frequency-division
multiplexing) in the nonlinear spectra.

The nonlinear Fourier transform is intertwined with the existence
of soliton solutions to the NLS equation.  Solitons are pulses
that retain their shape (or return periodically to their initial
shape) during propagation, and can be viewed as system
eigenfunctions, similar to the complex exponentials $e^{j\omega
t}$, which are eigenfunctions of linear systems.  An arbitrary
waveform can be viewed as a combination of solitons, associated
with the discrete nonlinear spectrum, and a non-solitonic
(radiation) component, associated with the continuous nonlinear
spectrum.

The goal of this first article is to introduce the mathematical
tools that underlie this approach to information transmission.
These tools are sufficiently general to encompass not only the
nonlinear Schr\"odinger equation, but also other completely
integrable nonlinear dispersive PDEs.  Thus, the transmission
scheme described here can also be applied to any channel model in
this general class.  These tools are also described in
mathematics and physics (see, \eg,
\cite{tao2006nfa,ablowitz2006sai,faddeev2007hmi}); here we
attempt to extract those aspects of the theory that are relevant
to the engineering aspects of the information transmission
problem.  In Part~II \cite{yousefi2012nft2} we will provide 
numerical methods for forward NFT at the receiver and in 
Part~III \cite{yousefi2013nft3} describe a nonlinear
frequency-division multiplexing communication method based on NFT, provide 
algorithms to implement inverse NFT at the transmitter and give
examples of achievable spectral efficiencies in actual fiber-optic systems.

After submitting this paper, we
became aware of the related paper
of Hasegawa and Nyu \cite{hasegawa1993ec}
introducing \emph{eigenvalue communication}, in which the authors
propose
to
encode information in quantities conserved under
NLS propagation, and use
the inverse scattering transform to
decode the amplitude of an isolated soliton-like pulse
$A\sech(t)$ in a single-user, point-to-point channel.
Using conserved
quantities can be helpful, as it has the potential to simplify communication
system design.  However, no improvement in capacity is achieved relative to
a scheme which encodes information in an equivalent, but not necessarily
conserved, set of parameters (\eg, modulating amplitude and phase and using
backpropagation for decoding).

Our motivation for introducing transmission schemes based on
the NFT stems---not from the simplicity arising through the
use of conserved quantities (as in \cite{hasegawa1993ec}) but rather---from
our observation that the use of conventional linear
multiplexing in nonlinear
channels inevitably leads to an interference-limited multiuser system \cite{yousefi2013nft3}.
As a result, a major factor contributing to the capacity limitations in optical fiber
networks \cite{essiambre2010clo} is the use of wavelength-division multiplexing.
By introducing nonlinear frequency-division multiplexing (NFDM) based on the
NFT (exploiting the integrability of the NLS equation),
a transmission scheme without deterministic distortions
can result; see [Part~III, particularly Sections VI. F and VI. G] for further
details about NFDM in comparison with \cite{hasegawa1993ec}.
Thus we use the tools of integrable systems for a different purpose
than in \cite{hasegawa1993ec}.

We will find it convenient to work with the
nonlinear Schr\"odinger equation (\ref{eq:st-nls}) in a normalized form.
By changing variables
\begin{IEEEeqnarray*}{rCl}
q=\frac{Q}{\sqrt{P}}, \qquad z=\frac{l}{\const{L}},\qquad t=\frac{\tau}{T_{0}},
\end{IEEEeqnarray*}
with $T_{0}=\sqrt{|\beta_2|\const{L}/2}$ and $P= 2 /(\gamma \const{L})$, we get the normalized NLS equation
\begin{IEEEeqnarray}{rCl}
jq_z(t,z)=q_{tt}+2|q(t,z)|^2q(t,z) + n(t,z),
\label{eq:nnls}
\end{IEEEeqnarray}
where 
\[
\E \left\{n(t,z)n^*(t',z')\right\}
=\frac{\sigma^2_0\const{L}}{PT_0} \delta_{W_n}(t-t')\delta(z-z'),
\]
where $W_n=WT_0$ is the normalized bandwidth. Throughout this paper, \eqref{eq:nnls} will be our primary
and motivating illustrative example.

\section{A Brief History of the Nonlinear Fourier Transform}

The nonlinear Fourier transform (also known as the inverse
scattering transform or IST) was originally developed as a
method for solving certain
nonlinear dispersive partial differential equations.
These are integrable PDEs, \ie, nonlinear differential
equations exhibiting certain hidden linearity. There are several
integrable equations having physical significance, among which is
the NLS equation.  The IST method was a result of extensive
efforts in theoretical physics and applied mathematics in 1960s and later,
closely associated with the notion of solitons in integrable models
\cite{ablowitz2006sai,faddeev2007hmi}. 

In the 1950s, in one of the first dynamical-systems simulations
performed on a computer \cite{fermi1955snp}, Fermi, Pasta and
Ulam performed a numerical experiment to understand why solids
have finite heat conductivity.  They modeled the solid as a
lattice with point masses at the lattice points coupled with
springs each having a quadratic nonlinearity.  To their surprise,
rather than observing an equipartition of energy among all Fourier modes,
energy cycled periodically among a few low-order modes.  Such
behavior implies that the nonlinear oscillator behaves somehow
linearly.

In the 1960s, Zabusky and Kruskal showed that 
equation of motion for the Fermi-Pasta-Ulam lattice in the continuum
limit is a remarkable PDE called
the Korteweg-de~Vries (KdV) equation
\cite{zabusky1965isc}, known in the study
of water waves.
The KdV
equation for the
evolution of a
real-valued pulse $q(t,z)$ as a function of time $t$ and
distance $z$ is
\begin{IEEEeqnarray}{rCl} 
q_{z}=qq_{t}+q_{ttt}.
\label{eq:kdv}
\end{IEEEeqnarray}
Zabusky and Kruskal found that (\ref{eq:kdv}) has
pulse-like (localized) solutions whose shape is preserved (or varies
periodically) during propagation.  Furthermore, they made the
surprising observation that when two such pulses are launched
towards each other, despite their nonlinear interaction, they
pass through each other without changing their shape.  Zabusky
and Kruskal coined the term \emph{soliton} for such solutions, in
recognition of their particle-like properties
\cite{zabusky1965isc}.

The spectacular properties of these solutions greatly excited the
mathematics and physics communities and many researchers started to study solitons.
In a celebrated paper \cite{gardner1967msk}, Gardner, Greene,
Kruskal and Miura uncovered some of the deep structure
underlying the KdV equation which is responsible for solitons and
their unusual properties.  The authors of \cite{gardner1967msk}
were studying the celebrated linear Schr\"odinger equation from
quantum mechanics, given by
\begin{IEEEeqnarray}{rCl}
\label{eq:lin-schr}
\psi_z(t,z) = \psi_{tt}(t,z) + g(t,z)\psi(t,z),
\end{IEEEeqnarray} 
where $\psi(t,z)$ is the wave-function
and $g(t,z)$ is an external potential.
They found that if one takes the solution $q(t,z)$ of the KdV
equation (\ref{eq:kdv}) as the external potential $g$ in
(\ref{eq:lin-schr}), then the eigenvalues of the Schr\"odinger
operator
\begin{equation}
	H=\frac{\partial^2}{\partial t^2}+g(t,z)
\label{eqn:sch-op}
\end{equation}
remain invariant during the
evolution in $z$!  Based on this critical observation, they
developed a method to recover the external potential $g(t,z) =
q(t,z)$ by solving an inverse problem for (\ref{eq:lin-schr}).
The method analytically predicts soliton solutions for the KdV
equation, as observed earlier by Zabusky and Kruskal through
numerical computations.  They had in fact found the IST for the
special case of the KdV equation.  

It was not immediately clear if the method developed in
\cite{gardner1967msk} could be generalized to other nonlinear
PDEs, since it is not obvious if there exists a certain auxiliary
operator, like the Schr\"odinger operator $H$, whose eigenvalues
are preserved during the evolution. In a landmark paper published
in 1968 \cite{lax1968ine}, Lax put the theory on a firm
mathematical footing. In particular, he established the
mathematical relationship between the auxiliary
operators with invariant eigenvalues (now called Lax pairs) and
the original nonlinear equation.  Once a Lax pair for a nonlinear
PDE is found, a method along the lines of \cite{gardner1967msk}
can be applied to solve that PDE.
  
Shortly afterwards, in 1972, Zakharov and Shabat found
a Lax pair for the NLS equation in one spatial dimension
\cite{zakharov1972ett},
and thus established that this equation,
too, could be solved in the same manner.
Details of this method for the NLS equation were
subsequently developed by Ablowitz and others
(see \cite{ablowitz2006sai} and references therein),
who also referred to this scheme as the ``nonlinear Fourier transform''.
After these discoveries from the 1960s and 70s,
research into solitons became an established
area of research, lying at the intersection of applied mathematics and nonlinear physics.

Nonlinear PDEs solvable by the NFT are called integrable
equations or exactly solvable models \cite{zakharov1991wi}.
These are usually Hamiltonian systems having an infinite number of conserved
quantities, and include the KdV, NLS, modified KdV,
sine-Gordon equations, and the Toda lattice, among
others  \cite{ablowitz2006sai,faddeev2007hmi}.
These equations all exhibit similar properties,
including the existence of soliton solutions.
The generation and processing of soliton signals, and the
engineering of novel transmission systems that support their
propagation, is described in \cite{singer1996spc}.  

\section{Canonical Lax Form for Exactly Solvable Models}
\label{sec:laxform}
\subsection{Lax Pairs and Evolution Equations}

We wish to consider linear differential operators
whose eigenvalues are invariant during an evolution \cite{lax1968ine}.
More precisely, we consider such operators
defined in terms of a signal $q(t,z)$
where the eigenvalues of the operator remain constant even
as $q$ evolves (in $z$) according to some evolution equation.

To facilitate the discussion, it is useful
to imagine a linear operator represented as a matrix; however,
we must keep in mind that,
 when moving from finite-dimensional spaces 
to infinite-dimensional spaces (of \eg, functions and operators),
some results do not carry over necessarily. 
The relevant properties of linear operators needed
for this paper are reviewed in Appendix~\ref{app:stso}.

Let $L(z)$ be a square matrix whose entries are functions of $z$.
Clearly, the eigenvalues of this matrix are in general functions of $z$ too.
However, for some matrices, it might be the case that while the entries of
the matrix change with $z$, the eigenvalues remain constant (independent of $z$).
Such a matrix, if diagonalizable, should be
similar to a constant diagonal matrix $\Lambda$, \ie,
$L(z)=G(z)\Lambda G^{-1}(z)$, for some similarity transformation
$G(z)$.

This idea generalizes to operators.
Let $\mathcal{H}$ be a Hilbert space,
let $\mathcal{D}$ be some domain that is dense in $\mathcal{H}$,
and let $L(z): \mathcal{D}\rightarrow \mathcal{H}$ be a family
of bounded linear operators indexed by a parameter $z$
\cite{reed1980mmm}.
If the eigenvalues of
$L(z)$ do not depend on $z$,
then we refer to $L(z)$ as an \emph{isospectral}
family of operators. If diagonalizable, it follows that for each $z$,
$L(z)$ is similar to a multiplication operator $\Lambda$
(the operator equivalent of a diagonal matrix; see Appendix~\ref{app:stso}),
\ie,
$L(z)=G(z)\Lambda G^{-1}(z)$,
for some operator $G(z)$.

Assuming that $L(z)$ varies smoothly with $z$, we can
consider the rate of the change (with
respect to $z$) of $L(z)$.  We have
\begin{IEEEeqnarray}{rCl}
\frac{\der L(z)}{\der z}&=&G_z\Lambda G^{-1}+G\Lambda
 \left(- G^{-1}G_z G^{-1} \right)\nonumber\\
&=&G_zG^{-1}\left(G\Lambda G^{-1}\right)-\left(G\Lambda
  G^{-1}\right)G_z
G^{-1}\nonumber \\
&=&M(z)L(z)-L(z)M(z)=\left[M,L\right],
\label{eq:dL-dt}
\end{IEEEeqnarray}
where
$G_z = \der G(z)/\der z$,
$M=G_z G^{-1}$, and $[M,L] \eqdef ML-LM$ is the \emph{commutator bracket}. In other words, every diagonalizable isospectral operator $L(z)$
satisfies the differential equation \eqref{eq:dL-dt}. 

Conversely, suppose $M(z)$ is given and the (unknown)
diagonalizable operator $L(z)$ evolves according to
\eqref{eq:dL-dt} with
initial condition $L(0)= G_0 \Lambda_0 G_0^{-1}$.
Let $G(z)$ be the (unique invertible) solution to $G_{z} = MG$ with $G(0) = G_0$.
One can easily verify that $L(z) = G(z) \Lambda_0 G(z)^{-1}$
satisfies \eqref{eq:dL-dt}.  Assuming that the solution to
a first-order differential equation is unique \cite{evanspde},
we see that $L(z)$ is an isospectral family.

The characterization of isospectral operators is therefore
summarized in the following lemma \cite{lax1968ine}.

\begin{lemma}
\label{lemma:isospectral}
Let $L(z)$ be a diagonalizable operator.  Then
$L(z)$ is isospectral if and only if it satisfies 
\begin{IEEEeqnarray}{rCl}
\frac{\der L}{\der z}=[M,L], 
\label{eq:lax}
\end{IEEEeqnarray}
for some operator $M$. If $L$ is self-adjoint 
(so that
$L$ is unitarily equivalent to a
multiplication operator, \ie, $L=G\Lambda G^*$),
then $M$ must be skew-Hermitian, \ie, $M^* = -M$.
\end{lemma}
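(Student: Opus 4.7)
The plan is to split the lemma into three parts---the ``only if'' direction, the ``if'' direction, and the skew-Hermitian refinement in the self-adjoint case---and to formalize the arguments that already appear in the paragraphs preceding the statement, then supply the new piece.

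For the forward (``only if'') direction, I would start from diagonalizability to write $L(z) = G(z)\Lambda G(z)^{-1}$ with $\Lambda$ the multiplication operator whose spectrum equals the (constant) spectrum of $L(z)$. Differentiating in $z$, invoking the identity $(G^{-1})' = -G^{-1}G'G^{-1}$, and setting $M(z) = G'(z)G(z)^{-1}$, I recover exactly the chain of equalities in \eqref{eq:dL-dt}, yielding $L' = [M,L]$. For the reverse (``if'') direction, starting from $L' = [M,L]$ with $L(0) = G_0 \Lambda_0 G_0^{-1}$, I would solve the linear operator initial value problem $G' = MG$, $G(0) = G_0$, to obtain a unique invertible $G(z)$, and then set $\widetilde L(z) = G(z)\Lambda_0 G(z)^{-1}$. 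The same algebraic manipulation used in the forward direction shows $\widetilde L' = [M, \widetilde L]$, while $\widetilde L(0) = L(0)$. Uniqueness of the solution of the first-order operator ODE $L' = [M,L]$ then forces $L(z) = \widetilde L(z)$, so $L(z)$ is similar to the fixed operator $\Lambda_0$ for every $z$ and is therefore isospectral.

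For the skew-Hermitian refinement, I would adopt the canonical choice $M = G'G^{-1}$. When $L = G\Lambda G^{*}$ is self-adjoint with $G$ unitary, $G^{-1} = G^{*}$ and so $M = G'G^{*}$. Differentiating the identity $GG^{*} = I$ gives $G'G^{*} + G(G^{*})' = 0$, and using $(G^{*})' = (G')^{*}$ I obtain
\[
M + M^{*} = G'G^{*} + G(G')^{*} = G'G^{*} + G(G^{*})' = 0,
\]
so $M$ is skew-Hermitian, as claimed.

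The main obstacle is the passage from finite-dimensional matrix algebra to infinite-dimensional operator theory. In particular, I need existence, uniqueness, and invertibility of the solution of $G' = MG$ when $M$ may be unbounded or $z$-dependent, together with a precise meaning for the similarity $L = G\Lambda G^{-1}$ on a common dense domain $\mathcal{D}$. The self-adjoint case is especially delicate: a skew-Hermitian $M$ generates a unitary evolution (Stone's theorem), which is exactly what is needed to keep $G(z)$ unitary for all $z$, but verifying this rigorously requires the operator-theoretic machinery summarized in Appendix~\ref{app:stso}.
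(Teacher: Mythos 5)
Your proposal is correct and follows essentially the same route as the paper: the forward direction is the differentiation of $L=G\Lambda G^{-1}$ giving \eqref{eq:dL-dt}, the converse is the construction of $G$ from $G'=MG$ together with uniqueness of the first-order initial value problem, and the skew-Hermitian claim is obtained by differentiating $GG^*=I$, exactly as the paper indicates. Your closing remarks on the infinite-dimensional subtleties go beyond what the paper makes explicit, but the core argument is the same.
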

\begin{proof}
The proof was outlined above.
The skew-Hermitian property of $M$ can be shown by differentiating
$G G^* = I$.
\end{proof}

\begin{figure}[t]
\centering
\ifCLASSOPTIONonecolumn
\includegraphics{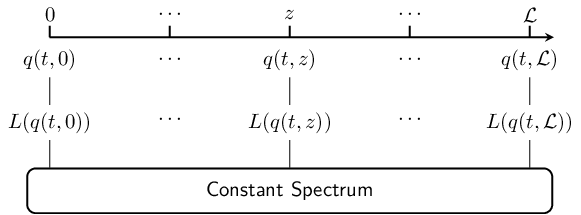}
\else
\includegraphics[scale=0.8]{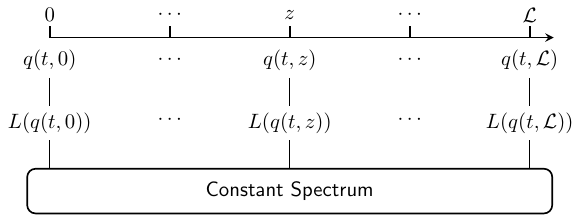}
\fi
\caption{An isospectral flow:  the spectrum of $L$
is held invariant even as $q(t,z)$ evolves.}
\label{fig:invariant}
\end{figure}

It is important to note that $L$ and $M$ do not have to be independent and
can depend on a common parameter, \eg, a function $q(t,z)$,
as illustrated in Fig.~\ref{fig:invariant}.
The isospectral
property of the solution is unchanged. The commutator bracket
$[M,L]$ in
\eqref{eq:lax} can create nonlinear evolution equations
for $q(t,z)$ in the form
\[
\frac{\partial q}{\partial z}=K(q),
\]
where $K(q)$ is some, in general nonlinear, function of $q(t,z)$ and its
time derivatives.   An example of this is the KdV equation.

\begin{example}[KdV equation]
\label{exam:kdv1}
Let $D=\frac{\partial}{\partial t}$
denote the time-derivative operator, and let
$q(t,z)$ be a real-valued function.
Finally, let
\begin{IEEEeqnarray*}{rCl}
L & =& D^2+q, \mbox{ and let}\nonumber \\
M & = & \frac{3}{2}D^3+\frac{1}{2}\left(Dq+qD\right).
\end{IEEEeqnarray*}
The Lax equation $L_z=[M,L]$ is easily simplified to 
\begin{IEEEeqnarray*}{rCl}
q_z - \left(q_{ttt}+qq_t\right)+(\textnormal{some terms})D \equiv \textbf{0},
\end{IEEEeqnarray*}
where $\textbf{0}$ is the zero operator.
The zero-order term of this equation as a polynomial in $D$, which must be zero, produces
the KdV equation
$ q_z=q_{ttt}+qq_t $.
Using $(D^{k})^*=(-1)^kD$ (provable by integration by parts), it is easy to see that $L$ and $M$ are self-adjoint and 
skew-Hermitian, respectively, since they are real-valued and
involve even and odd powers of $D$, respectively. The eigenvalues of
$L$ are thus preserved from Lemma~\ref{lemma:isospectral}.
Note that the $L$ operator in this example
is precisely the (linear) Schr\"odinger operator $H$ given
in \eqref{eqn:sch-op}.
\qed\end{example}
\begin{definition}
A pair of operators $L$ and $M$, depending on $z$, are called a Lax pair $(L,M)$ if they satisfy the Lax equation
\eqref{eq:lax}. Following Lemma~\ref{lemma:isospectral}, the eigenvalues of the $L$ operator are independent of
$z$.
\end{definition}

\subsection{The Zero-Curvature Condition}

The eigenvalues of the operator $L$, which
are constant in an isospectral flow, are defined via
\begin{equation}
Lv = \lambda v.
\label{eq:L-eq}
\end{equation}
Taking the $z$ derivative of \eqref{eq:L-eq} and using the Lax equation
$L_z=[M,L]$, we obtain $\left(L-\lambda I\right)\left(v_z-Mv\right)=0$.
Since $L-\lambda I$ vanishes only on eigenvectors of $L$,
it must be that $v_z - Mv$ is an element of the eigenspace, \ie,
$v_z - Mv = \alpha G x$, $\alpha\in \Complex$, and $x$ is
a coordinate vector. The
choice of $\alpha$ does not influence the results of future
sections; thus for simplicity we set $\alpha=0$.
It follows that an eigenvector $v(t,z)$ evolves based
on the linear equation 
\begin{equation}
\label{eq:M-eq}
v_z=Mv.
\end{equation}

Furthermore,
it is often more convenient to re-write \eqref{eq:L-eq} as
\begin{equation}
\label{eq:P-eq}
v_t= Pv,
\end{equation}
for some operator $P$. The relationship between $P$ and $L$
can be derived (if necessary)
by combining $\left(DI-P\right)v=0$ with $\left(L-\lambda I\right)v=0$,
obtaining
\begin{equation}
\label{eq:P-L}
P = \Sigma ( L - \lambda I) + DI,
\end{equation}
where $\Sigma$ is some invertible operator, and $D  = \frac{\partial}{\partial t}$ as in Example~\ref{exam:kdv1}.

Combining equations \eqref{eq:M-eq} and \eqref{eq:P-eq}
by using the equality
of mixed derivatives, \ie, $v_{tz}=v_{zt}$, the Lax equation
(\ref{eq:lax}) is reduced to the \emph{zero-curvature condition}
\cite{faddeev2007hmi}
\begin{equation}
\label{eq:zero-curv}
P_z-M_t+[P,M]=0.
\end{equation}
Note that the nonlinear equation derived
from (\ref{eq:zero-curv}) results as a compatibility
condition between the two linear equations
(\ref{eq:M-eq}) and (\ref{eq:P-eq}).  This shows
that certain nonlinear equations possess a
``hidden linearity'' in the form of (\ref{eq:M-eq}) and (\ref{eq:P-eq}).

Following the work of Zakharov and Shabat on the NLS
equation~\cite{zakharov1972ett}, Ablowitz et al.~\cite{ablowitz1974ist}
suggested
that for many equations of practical significance, the operator
$P$ can be fixed as
\begin{equation}
\label{eq:P-def}
P = \begin{pmatrix} -j\lambda & r(t,z) \\s(t,z) & j\lambda \end{pmatrix},
\end{equation}
where $r(t,z)$ and $s(t,z)$ are functions---depending
on $q(t,z)$---to be determined to
produce a given nonlinear evolution equation.
From \eqref{eq:P-L} this corresponds to the $L$ operator 
\begin{equation}
L = j \begin{pmatrix} D & -r(t,z) \\ s(t,z) & -D \end{pmatrix},
\label{eq:L-op}
\end{equation}
with $\Sigma=\diag(j,-j)$.
In this case, both $L$ and $P$ operate on $2 \times 1$ vector
functions.

Equation \eqref{eq:P-eq}, with $P$ as in \eqref{eq:P-def}, is known as the
\emph{AKNS system} (after the authors of \cite{ablowitz1974ist}) and
is central in the study of the nonlinear Fourier
transform \cite{faddeev2007hmi}. The important special case
where $r(t,z)=q(t,z)$ and $s(t,z)=-q^*(t,z)$ is generally known as
the \emph{Zakharov-Shabat system}. 
We will refer to (\ref{eq:L-eq}), (\ref{eq:M-eq}) and (\ref{eq:P-eq})
as the $L$-, $M$-, and $P$-equations, respectively.
Throughout this paper, we will assume that the domain of $L$, $P$, and
$M$ are subsets of the Hilbert space $L^2(\Reals)$, 
denoted here by $\mathcal{H}$, depending on the particular structure of that operator.

\begin{example}[Sine-Gordon equations]
Let $r(t,z)=-s(t,z)=\frac{1}{2}q_t(t,z)$ in \eqref{eq:P-def} and let
\begin{equation*}
M=\frac{j}{4\lambda}\begin{pmatrix} \cos(q) & -\sin(q) \\ -\sin(q) & -\cos(q) \end{pmatrix}.
\end{equation*}
Then the zero-curvature equation is simplified to $q_{tz}=\sin(q)$.
Taking $r=s=\frac{1}{2}q_t$ and $M$ as 
\begin{equation*}
M=\frac{j}{4\lambda}\begin{pmatrix} \cosh(q) & -\sinh(q) \\ \sinh(q) & -\cosh(q) \end{pmatrix}
\end{equation*}
gives $q_{tz}=\sinh(q)$.
\qed
\end{example}

\begin{example}[Nonlinear Schr\"odinger equation]
\label{exam:nls}
Take $r=q$, $s=-q^*$ and 
\[
M=\begin{pmatrix} 2j\lambda^2-j|q(t,z)|^2 & -2\lambda q(t,z)-jq_t(t,z) \\ 2\lambda q^*(t,z)-jq_t^*(t,z) & -2j\lambda^2+j|q(t,z)|^2\end{pmatrix}.
\]
The zero-curvature equation is simplified to
$jq_z(t,z)=q_{tt}(t,z)+2|q(t,z)|^2q(t,z)$.
\qed
\end{example}

\begin{example}[KdV equation revisited]
\label{ex:kdv2}
Let $r=\frac{2}{\sqrt{3}}q$, $s=-\frac{1}{4\sqrt{3}}$ and 
\begin{equation*}
M=\begin{pmatrix} 
4j\lambda^3-\frac{j\lambda q}{3} +\frac{q_t}{6} & 
-\frac{8\lambda^2q}{\sqrt{3}}-\frac{4}{\sqrt{3}}j\lambda q_t+\frac{2q^2}{3\sqrt{3}}+\frac{2q_{tt}}{\sqrt{3}}
\\  
\frac{\lambda^2}{\sqrt{3}}-\frac{q}{12\sqrt{3}}
& 
-4j\lambda^3+\frac{j\lambda q}{3} -\frac{q_t}{6} 
\end{pmatrix}.
\end{equation*}
The zero-curvature equation leads to $q_z=qq_t+q_{ttt}$. 
\qed
\end{example}

It should be noted from
Examples~\ref{exam:kdv1} and~\ref{ex:kdv2} that the
choice of $L$ and $M$ giving rise
to a given nonlinear equation is not unique.
Obviously one can scale $L$ by a number, or add a
constant $\alpha I$ to $L$ or $M$. In addition, both the Lax equation \eqref{eq:lax} and
\eqref{eq:L-eq} are unchanged under orthogonal transformations,
\ie, replacing $L$ and $M$ with $\Sigma
L\Sigma^T$ and $\Sigma M\Sigma^T$, respectively,
where $\Sigma$ is a (constant) orthogonal
matrix, \ie, $\Sigma^T\Sigma=I$.
Note further that it may be possible to
choose two Lax pairs $(L_1, M_1)$ and $(L_2,M_2)$ for a given equation
such that
$L_1$ is self-adjoint and $L_2$ is not self-adjoint. The eigenvalues
of $L_1$ and $L_2$ are, respectively, real and complex;  see
Appendix~\ref{app:stso}.

\subsection{Lax Convolution and Integrable Communication Channels}

Linear systems traditionally have been described by linear constant
coefficient differential equations.  An example is the one-dimensional heat
equation $q_z=c^2q_{tt}$, where $c$ is the diffusion coefficient and
$q(t,z)$ represents the heat profile across a rod extending in \emph{space} $t$, as \emph{time} $z$ goes on. From a systems point of view, this
defines a linear time-invariant system from input $x(t)=q(t,0)$ at
$z=0$ to the output $y(t)=q(t,\mathcal{L})$ at some $z=\mathcal{L}$. The role of $z$ is therefore just a
parameter and once fixed (to $z=\mathcal{L}$), the system is described by an
impulse response (or a Green function) $h(t;\mathcal{L})$, representing the
underlying (linear) convolution. 

Following this analogy, we wish to define a system in terms of a
Lax pair $(L,M)$.  Here, $L$ and $M$ are parametrized by a waveform
$q(t,z)$. Such a system accepts a waveform $x(t)=q(t,0)$ at its input and
produces a waveform $y(t)=q(t,\mathcal{L})$ at its output, according to the
evolution equation induced by $L_z = [ M, L ]$. The time-domain input-output
map is thus given by an evolution equation of the form $q_z=K(q)$,
obtainable from the Lax equation \eqref{eq:lax} (or its equivalent
\eqref{eq:zero-curv}). We refer to such a system as an
\emph{integrable system}. Note that an integrable system is completely
characterized by the two operators $(L,M)$ and the parameter $z=\const{L}$,
independent of the signals.  We denote such a system using
the triple $(L,M;\const{L})$.

\begin{figure}[t!]
\centering
\ifCLASSOPTIONonecolumn
\includegraphics{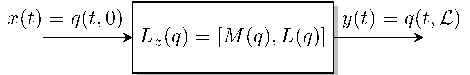}
\else
\includegraphics[scale=0.8]{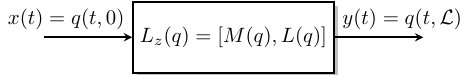}
\fi
\caption{A system defined by Lax convolution.} 
\label{fig:lax-conv}
\end{figure}

\begin{definition}[Lax convolution]  
We refer to the action of an integrable system $S=(L,M;\const{L})$
on the input $q(t,0)$ as the Lax convolution of $q$ with $S$.
We write the system output as
$q(t,\mathcal{L})=q(t,0)\convolution(L,M;\const{L})$. See Fig.~\ref{fig:lax-conv}.
\end{definition}

\begin{definition}[Integrable communication channels]
A waveform communication channel $C: x(t)\times n(t,z)\rightarrow y(t)$ with
inputs 
$x(t)\in L^1(\Reals)$ and space-time noise
$n(t,z)\in L^1(\Reals,\Reals^+)$, and output $y(t)\in L^1(\Reals)$, is
said to be \emph{integrable} if the noise-free channel is an integrable system.
\end{definition}

Note that noise can be introduced into an integrable system in a variety of
ways. In the approach taken in these papers, we assume that
the signal-to-noise ratio is not too
small, so that the stochastic system may justifiably be considered as a
perturbation of the deterministic system. Furthermore,
we limit ourselves to integrable channels with additive noise, \ie, 
\[
q_z=K(q)+n(t,z), \quad
q(t,0)=x(t),\quad
q(t,\const{L})=y(t), 
\]
where $n(t,z)$ is distributed band-limited noise. 
Note that this model is \emph{not} in general equivalent to
one in which noise is added (in lumped fashion) at the channel output.
Here, the noise is distributed in space, and so interacts with the
signal (in a potentially complicated manner) according to the
given evolution equation.  

In this paper, we develop a scheme for communication over integrable
channels.  By various choices of Lax pair $(L,M)$, one can construct a
variety of interesting channel models, mostly nonlinear, which go beyond the
linear channel models typically studied in data communications.
Interestingly, some linear channels can also be analyzed using this
nonlinear spectral approach advocated here \cite{Fokas97}.
As noted in Section~\ref{sec:introduction},
the central application (and motivation) for this work
is fiber-optic
communication, in which the channel model is given by the nonlinear
Schr\"odinger equation (\ref{eq:st-nls}), and
for which a Lax pair was given in Example~\ref{exam:nls}.
In this paper
we first discuss the deterministic (noise-free)
case and later, in [Part~III], treat
noise as a perturbation of the deterministic system.

\section{Nonlinear Fourier transform}
\label{sec:nft}
In this section, we assume that a function $q(t,\cdot)$
is given, and we define its nonlinear Fourier transform
with respect to the Lax operator $L$ \eqref{eq:L-op}.
As the notation $q(t,\cdot)$
implies, in this and the next section, the variable $z$ can take on any
value in the range $[0,\const{L}]$ and is irrelevant in the forward and
inverse transforms.  We shall therefore omit the index $z$ in what
follows.
\begin{assumptions}
We assume that $q(t)$ satisfies:
\begin{enumerate}
\item $q(t) \in L^1(\Reals)$;
\item $q(t) \rightarrow 0$ as $|t| \rightarrow \infty$.
\end{enumerate}
\label{assumptions:q}
\end{assumptions}

As previously noted, for concreteness we carry through
the development of the NFT for the case of the NLS equation (\ref{eq:nnls}),
for which the $P$-equation is the Zakharov-Shabat system
\begin{IEEEeqnarray}{rCl}
v_t&=&P(\lambda,q) v = \begin{pmatrix} -j\lambda & q(t) \\-q^*(t) & j\lambda \end{pmatrix}v .
\label{eq:dv-dt}
\end{IEEEeqnarray}

If for a given $\lambda \in \Complex$ the operator $L-\lambda I$
is not
invertible, then we say that
$\lambda$ belongs to the spectrum of $L$
and $v(t,\lambda)$ represents its associated eigenvector. In
finite-dimensional 
Hilbert spaces of matrices the spectrum is a discrete finite set of
eigenvalues. 
This may no longer be true in infinite-dimensional
spaces of operators, where the eigenvalues (if they exist)
may only be one part of the spectrum. See Appendix~\ref{app:stso}. 

The nonlinear Fourier transform of a signal $x(t)$ with
respect to an operator $L$ in a Lax pair is defined via the spectral analysis of
the $L$ operator, which we consider next.

\subsection{Canonical Eigenvectors and Spectral Coefficients}

We wish to study solutions of (\ref{eq:dv-dt}), in which
vectors $v(t) = (v_1(t), v_2(t))^T$ are considered as elements of the 
vector space $\mathcal{H}$.
We begin by equipping the vector space $\mathcal{H}$ with a symplectic
bilinear form $\mathcal{H}\times\mathcal{H}\mapsto\Complex$, which, for any \emph{fixed}
value of $t \in \mathbb{R}$, is defined as
\begin{equation*}
\inners{v(t)}{w(t)}=v_1(t) w_2(t) - v_2(t) w_1(t).
\end{equation*}
Let us also define the adjoint of any vector $v$ in $\mathcal{H}$
as
\[
\tilde{v}(t)=\left( \begin{array}{r}v_2^*(t)\\-v_1^*(t)\end{array}\right).
\]
The following properties hold true for all $v$ and $w$ in $\mathcal{H}$:
\begin{itemize}
\item $\tilde{\tilde{v}}=-v$;
\item $\inners{v}{v} = 0$;
\item $\inners{v}{w} = -\inners{w}{v}$;
\item $\inners{\tilde{v}}{v}=-\inners{v}{\tilde{v}}=|v_1|^2+|v_2|^2$;
\item $\frac{\der}{\der t}\inners{v}{w}=\inners{v_t}{w}+\inners{v}{w_t}$;
\item for every $2\times 2$ matrix $A$,
$\inners{Av}{w}+\inners{v}{Aw}=\tr (A)\inners{v}{w}$.
\end{itemize}

There are generally infinitely many solutions $v$ of \eqref{eq:dv-dt}
for a given $\lambda \in \Complex$, parametrized by the set of all possible boundary conditions.
These solutions form a subspace $E_{\lambda}$ of continuously
differentiable $2\times 1$ vector functions (an eigenspace).

\begin{lemma}
\label{lem:basis}
For all vectors $v(t)$ and $w(t)$ in $E_{\lambda}$,
\begin{enumerate}
\item $\tilde{v}\in E_{\lambda^*}$, \ie,
$\tilde{v}_t=P(\lambda^*,q)\tilde{v}$;\label{basis:prop2}
\item $\inners{v(t)}{w(t)}$ is a constant, independent of $t$;\label{basis:prop3}
\item If $\inners{v(t)}{w(t)}\neq 0$, then $v$ and $w$ are linearly
independent and form a basis for $E_{\lambda}$\label{basis:prop4};
\item $\dim(E_\lambda)=2$\label{basis:prop5}. 
\end{enumerate}
\end{lemma}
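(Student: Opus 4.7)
The plan is to establish the four properties essentially in the order (1), (2), (4), (3), since the basis claim (3) relies on having dimension $2$.

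For (1), I would simply carry out a direct computation. Writing $v=(v_1,v_2)^T$ with $v_{1,t}=-j\lambda v_1+qv_2$ and $v_{2,t}=-q^*v_1+j\lambda v_2$, I conjugate and rearrange:
\begin{IEEEeqnarray*}{rCl}
(v_2^*)_t &=& -qv_1^*-j\lambda^*v_2^*, \\
(-v_1^*)_t &=& -j\lambda^*v_1^*-q^*v_2^*,
\end{IEEEeqnarray*}
and then check this coincides entry-by-entry with $P(\lambda^*,q)\tilde v$ using the explicit matrix form in \eqref{eq:dv-dt}. This is the one place where the particular Zakharov--Shabat structure $s=-q^*$ is used, and I would flag that the identity fails for a general AKNS system.

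For (2), I would apply two of the listed properties of $\inners{\cdot}{\cdot}$ in sequence. Differentiating and using bilinearity,
\begin{IEEEeqnarray*}{rCl}
\frac{\der}{\der t}\inners{v}{w}
=\inners{v_t}{w}+\inners{v}{w_t}
=\inners{Pv}{w}+\inners{v}{Pw}
=\tr(P)\inners{v}{w}.
\end{IEEEeqnarray*}
Since $\tr P(\lambda,q)=-j\lambda+j\lambda=0$, the derivative vanishes and $\inners{v}{w}$ is constant in $t$.

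For (4), the $P$-equation is a first-order linear ODE system in $\mathbb{C}^2$ with continuous coefficients in $t$ (since $q(\cdot,z)\in L^1(\mathbb{R})$ will be upgraded as needed and, in the usual setup, we have at worst locally integrable coefficients for which Picard-type theorems apply). Fixing any base point $t_0\in\mathbb{R}$, the evaluation map $E_\lambda\to\mathbb{C}^2$, $v\mapsto v(t_0)$, is linear, injective by uniqueness, and surjective by existence; hence $\dim E_\lambda=2$. Finally, (3) follows immediately: by (2), $\inners{v(t_0)}{w(t_0)}=\inners{v}{w}\neq 0$, so $v(t_0)$ and $w(t_0)$ are linearly independent in $\mathbb{C}^2$ (if $w(t_0)=\alpha v(t_0)$, then $\inners{v(t_0)}{w(t_0)}=\alpha\inners{v(t_0)}{v(t_0)}=0$), hence $v$ and $w$ are linearly independent elements of the two-dimensional space $E_\lambda$ and therefore form a basis.

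The only mildly subtle step is (1), where one must track the complex conjugation and the sign flip in the adjoint definition carefully; everything else is essentially a one-line consequence of the listed algebraic properties of $\inners{\cdot}{\cdot}$ together with standard linear ODE theory. No substantial obstacle is expected.
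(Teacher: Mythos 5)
Your proposal is correct, and parts (1) and (2) match the paper's proof exactly (the paper simply asserts (1) ``follows directly from the $P$-equation'' and computes $\frac{\der}{\der t}\inners{v}{w}=\tr(P)\inners{v}{w}=0$ for (2)). Where you genuinely diverge is in the logical order of (3) and (4): the paper proves (3) first and \emph{deduces} (4) from it, whereas you prove (4) first and deduce (3). Concretely, the paper fixes $t$, expands an arbitrary $u\in E_\lambda$ as $u(t)=a(t)v(t)+b(t)w(t)$, solves for the coefficients as ratios of symplectic products, $a=\inners{u}{w}/\inners{v}{w}$ and $b=\inners{u}{v}/\inners{w}{v}$, and invokes (2) to conclude these are constants, so that $\{v,w\}$ spans $E_\lambda$; dimension $2$ then falls out. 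You instead get $\dim E_\lambda=2$ directly from existence and uniqueness for the linear ODE system (the evaluation map $v\mapsto v(t_0)$ being a linear isomorphism onto $\Complex^2$), and then (3) is the elementary observation that a nonvanishing symplectic form forces pointwise linear independence in a two-dimensional space. Your route makes the reliance on ODE well-posedness explicit and keeps (3) as a one-line corollary; the paper's route is more self-contained in that it extracts everything from the algebra of $\inners{\cdot}{\cdot}$, though it too tacitly uses that $v(t),w(t)$ span $\Complex^2$ at each fixed $t$ in order to write down the expansion of $u(t)$. Both arguments are sound and essentially equivalent in content.
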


\emph{Proof}:
Property \ref{basis:prop2}) follows directly from (\ref{eq:dv-dt}).
To see \ref{basis:prop3}), note that
$\frac{\der }{\der t}\inners{v}{w} = \inners{v_t}{w} + \inners{v}{w_t}
= \inners{Pv}{w} + \inners{v}{Pw}
= \tr(P)\inners{v}{w}
= 0$. 
To see \ref{basis:prop4}), fix $t$ and
let $u(t)\in E_\lambda$, then $u(t)=a(t)v(t)+b(t)w(t)$ for some $a(t)$ and $b(t)$. Taking the symplectic inner product
of both sides with $w$ and $v$, we get $a(t)=\inners{u}{w}/\inners{v}{w}$ and
$b(t)=\inners{u}{v}/\inners{w}{v}$. From Property~\ref{basis:prop3}, $\inners{u}{w}$, $\inners{v}{w}$, $\inners{u}{v}$, and $\inners{w}{v}$ 
are all independent of $t$. It follows that $a$ and $b$ are also
independent of $t$. Finally, \ref{basis:prop5}) follows from \ref{basis:prop4}).
\qed

An important conclusion of Lemma~\ref{lem:basis} is that any two linearly
independent solutions $u$ and $w$ of \eqref{eq:dv-dt} provide a basis
for the solution space. To choose two such solutions,
we examine the behavior of the equation at large values of
$|t|$.
By Assumption~\ref{assumptions:q}.2,
as $|t|\rightarrow \infty$ \eqref{eq:dv-dt}
is reduced to 
\begin{IEEEeqnarray*}{rCl}
v_t& \rightarrow &\begin{pmatrix} -j\lambda & 0 \\0 & j\lambda \end{pmatrix}v,\qquad
\textnormal{ for large } |t|,
\end{IEEEeqnarray*}
which has a general solution
\begin{IEEEeqnarray*}{rCl}
v(t,\lambda) \rightarrow \left(\alpha e^{-j\lambda t}, \beta e^{j\lambda
  t}\right)^T, \quad \alpha, \beta \in \Complex.
\end{IEEEeqnarray*}

Two possible boundary conditions, bounded in the upper half
complex plane $\Complex^+ = \{ \lambda : \Im(\lambda)>0\}$, are
\begin{IEEEeqnarray}{rCl}
v^1(t,\lambda)&\rightarrow& \begin{pmatrix}0\\1\end{pmatrix}e^{j\lambda t},\:\:\:\quad
t\rightarrow +\infty \IEEEyesnumber\IEEEyessubnumber \label{eq:bound-cond-1},\\
v^2(t,\lambda)&\rightarrow& \begin{pmatrix}1\\0\end{pmatrix}e^{-j\lambda
  t}, \quad t\rightarrow -\infty.
\IEEEyessubnumber
\label{eq:bound-cond-2}
\end{IEEEeqnarray}
We solve (\ref{eq:dv-dt}) for a given $\lambda\in\Complex^+$
under the boundary conditions \eqref{eq:bound-cond-1}--\eqref{eq:bound-cond-2},
and denote the resulting solutions for all $t  \in \Reals$
as $v^1(t,\lambda)$ and $v^2(t,\lambda)$.
We can also solve
(\ref{eq:dv-dt}) for $\lambda^*$ under the adjoint boundary
conditions, bounded in the lower half complex plane $\Complex^{-}=\{\lambda : \Im(\lambda)<0\}$,
\begin{IEEEeqnarray}{rCCl}
v^1(t,\lambda^*)&\rightarrow& \begin{pmatrix}0\\1\end{pmatrix}e^{j\lambda^*t},\qquad
\:\: t\rightarrow +\infty, \IEEEyesnumber\IEEEyessubnumber\label{eq:adj-bound-cond-1}\\
v^2(t,\lambda^*)&\rightarrow& \begin{pmatrix}-1\\0\end{pmatrix}e^{-j\lambda^* t}, \quad t\rightarrow -\infty,
\IEEEyessubnumber\label{eq:adj-bound-cond-2}
\end{IEEEeqnarray}
giving rise to two solutions
$v^1(t,\lambda^*)$ and $v^2(t,\lambda^*)$.  Taking tilde ``$\sim$'' operation, we
obtain $\tilde v^1(t,\lambda^*)$ and $\tilde v^2(t,\lambda^*)$. From Lemma~\ref{lem:basis}
we have that $\tilde{v}^1(t,\lambda^*)$ and
$\tilde{v}^2(t,\lambda^*)$ are elements of $E_{\lambda}$.
These four eigenvectors
$v^1(t,\lambda)$,
$v^2(t,\lambda)$,
$\tilde{v}^1(t,\lambda^*)$, and
$\tilde{v}^2(t,\lambda^*)$, all of them elements of $E_{\lambda}$,
are called \emph{canonical eigenvectors}.
Fig.~\ref{fig:can-eigen} illustrates the canonical eigenvectors
at their boundaries.

\begin{figure}[t]
\centering
\ifCLASSOPTIONonecolumn
\includegraphics{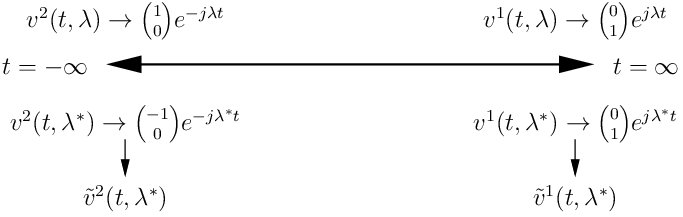}
\else
\includegraphics[width=\columnwidth]{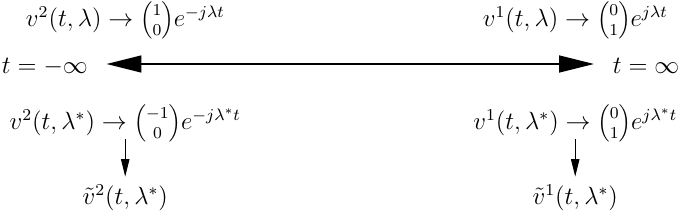}
\fi
\caption{Boundary conditions for the canonical eigenvectors.}
\label{fig:can-eigen}
\end{figure}

\begin{lemma}
Canonical eigenvectors satisfy:
\begin{enumerate}
  \item $\langle \tilde{v}^1(t,\lambda^*), v^1(t,\lambda\rangle_s=\langle v^2(t,\lambda),\tilde{v}^2(t,\lambda^*)\rangle_s=1$;
\item $\left\{v^1(t,\lambda),\tilde{v}^1(t,\lambda^*)\right\}$ and $\left\{v^2(t,\lambda),\tilde{v}^2(t,\lambda^*)\right\}$
are independent sets in $E_\lambda$;
\item $\widetilde{v^1(t,\lambda)}=\tilde{v}^1(t,\lambda^*)$ and  $\widetilde{v^2(t,\lambda)}=-\tilde{v}^2(t,\lambda^*)$. 
\end{enumerate}
\end{lemma}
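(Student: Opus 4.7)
The strategy is to convert the two claimed identities in part (1) into asymptotic boundary computations by exploiting the constancy in $t$ of the symplectic form (Property~\ref{basis:prop3}) of Lemma~\ref{lem:basis}), and then to derive part (2) as an immediate corollary using the basis criterion in the same lemma.

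I would first verify that all four canonical eigenvectors actually live in a common eigenspace $E_\lambda$, so that Lemma~\ref{lem:basis} applies to the inner products in question. By construction $v^1(t,\lambda),v^2(t,\lambda)\in E_\lambda$, while $v^1(t,\lambda^*)$ and $v^2(t,\lambda^*)$ satisfy \eqref{eq:dv-dt} at spectral parameter $\lambda^*$ and therefore belong to $E_{\lambda^*}$. Property~\ref{basis:prop2}) of Lemma~\ref{lem:basis} sends the latter pair through the adjoint into $E_{\lambda^{**}}=E_\lambda$, so $\tilde{v}^1(t,\lambda^*),\tilde{v}^2(t,\lambda^*)\in E_\lambda$ and every symplectic product in the statement is formed from vectors in a single $E_\lambda$.

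For the first equality in part (1) I would push $t\to+\infty$: \eqref{eq:bound-cond-1} gives $v^1(t,\lambda)\to(0,e^{j\lambda t})^T$ and \eqref{eq:adj-bound-cond-1} gives $v^1(t,\lambda^*)\to(0,e^{j\lambda^* t})^T$, so applying the definition of the adjoint yields $\tilde{v}^1(t,\lambda^*)\to(e^{-j\lambda t},0)^T$ and the symplectic form collapses to $e^{-j\lambda t}\cdot e^{j\lambda t}-0\cdot 0=1$. By Property~\ref{basis:prop3}) this is the constant value for all $t$. For the companion identity on $v^2$ I would do the same thing at $t\to-\infty$ using \eqref{eq:bound-cond-2} together with the adjoint boundary condition on $v^2(\cdot,\lambda^*)$; the particular sign choices built into \eqref{eq:adj-bound-cond-1} are tailored precisely so that this second constant also comes out to $1$.

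Part (2) is then immediate: since each of the two symplectic products in part (1) equals $1\ne 0$, Property~\ref{basis:prop4}) of Lemma~\ref{lem:basis} delivers the desired linear independence, and combined with Property~\ref{basis:prop5}) ($\dim E_\lambda=2$) shows each pair is in fact a basis of $E_\lambda$. The only delicate step in the whole argument is the bookkeeping of conjugates and signs in the asymptotic substitution---one must track that the adjoint swaps components, conjugates entries, and flips one sign, so that the exponent at the boundary turns into $-j\lambda$ rather than $-j\lambda^*$. Beyond this there is no real obstacle, because the constancy in $t$ of the symplectic form collapses the entire calculation to the boundary $|t|\to\infty$, where the canonical eigenvectors are explicit.
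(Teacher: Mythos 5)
Your proof is correct and follows the paper's own argument exactly: the $t$-independence of the symplectic form (property 2 of Lemma~\ref{lem:basis}) reduces part (1) to evaluating the boundary conditions at $|t|\to\infty$, and part (2) then follows from the nonvanishing of these products via the basis criterion of the same lemma. The only caveat is that, taking the paper's adjoint boundary condition $v^2(t,\lambda^*)\to(-1,0)^T e^{-j\lambda^* t}$ literally, the second product evaluates to $-1$ rather than $+1$; this sign discrepancy lies in the paper's own conventions (its proof only computes the first product explicitly), and in any case does not affect the independence conclusion in part (2).
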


\emph{Proof}:
1) Since $\inners{\tilde{v}^1}{v^1}$ is independent of $t$, using
\eqref{eq:bound-cond-1} and \eqref{eq:adj-bound-cond-1},
$\inners{\tilde{v}^1(t,\lambda^*)}{v^1(t,\lambda)}=
\inners{\tilde{v}^1(+\infty,\lambda^*)}{v^1(+\infty,\lambda)}=1$. 2)
Follows from 1) and Lemma~\ref{lem:basis}. 3) All four canonical eigenvectors
are in $E_\lambda$. This part thus follows by noting the 
boundary conditions \eqref{eq:bound-cond-1}--\eqref{eq:adj-bound-cond-2}.\qed

Choosing $\tilde{v}^1(t,\lambda^*)$ and $v^1(t,\lambda)$ as a basis of
$E_\lambda$, one can project $v^2(t,\lambda),\tilde{v}^2(t,\lambda^*)\in
E_\lambda$ on this basis to obtain
\begin{IEEEeqnarray}{rCl}
v^2(t,\lambda)&=&a(\lambda)\tilde{v}^1(t,\lambda^*)+b(\lambda)v^1(t,\lambda), \IEEEyesnumber\IEEEyessubnumber\label{eq:proj-1}\\
\tilde{v}^2(t,\lambda^*)&=&-b^*(\lambda^*)\tilde{v}^1(t,\lambda^*)+a^*(\lambda^*)v^1(t,\lambda), \IEEEyessubnumber\label{eq:proj-2}
\end{IEEEeqnarray}
where $a(\lambda)=\inners{v^2}{v^1}$,
$b(\lambda)=\inners{\tilde{v}^1}{v^2}$, and the coefficients in
\eqref{eq:proj-2} are obtained by applying the tilde ``$\sim$''
operation to \eqref{eq:proj-1}. A crucial property, following
from Lemma~\ref{lem:basis}, is that
$a(\lambda)$ and $b(\lambda)$
are time-independent.
The time-independent complex scalars $a(\lambda)$ and $b(\lambda)$ are
called the
\emph{nonlinear Fourier coefficients} \cite{ablowitz2006sai,faddeev2007hmi}.

Since the nonlinear Fourier coefficients are time independent, to
facilitate computing them, for simplicity we can send $t$ to, \eg,
$+\infty$ where $v^1(+\infty,\lambda)$ and $\tilde{v}^1(+\infty,\lambda^*)$
are known. The other two canonical eigenvectors  $v^2$ and $\tilde{v}^2$ are then
propagated from their boundary values $v^2(-\infty,\lambda)$ and
$\tilde{v}^2(-\infty,\lambda^*)$  at $t=-\infty $ according to
$v_t^2=P(\lambda,q)v^2$
and $\tilde{v}^2_t=P(\lambda^*,q)\tilde{v}^2$ to obtain $v^2(\infty,\lambda)$
and $\tilde{v}^2(\infty,\lambda^*)$.
At this stage, we have available all four canonical eigenvectors  at one time,
namely,
\[
\left\{v^2(+\infty,\lambda),
\tilde{v}^2(+\infty,\lambda^*),v^1(+\infty,\lambda),
\tilde{v}^1(+\infty,\lambda^*)\right\}.
\]
We can now project $v^2(+\infty,\lambda)$ and $\tilde{v}^2(+\infty,\lambda^*)$
onto the basis $v^1(+\infty,\lambda)$ and $\tilde{v}^1(+\infty,\lambda^*)$
according to \eqref{eq:proj-1}--\eqref{eq:proj-2} to obtain
 \begin{IEEEeqnarray}{rCl}
\left(v^2(+\infty,\lambda),
\tilde{v}^2(+\infty,\lambda^*)\right)=\left(
\tilde{v}^1(+\infty,\lambda^*),
v^1(+\infty,\lambda)\right) S,\nonumber\\
\label{eq:S-definition}
\end{IEEEeqnarray}
where
\begin{IEEEeqnarray*}{rCl}
S=\begin{pmatrix}
a(\lambda) & -b^*(\lambda^*)  \\
b(\lambda) & a^*(\lambda^*)
\end{pmatrix}.
\end{IEEEeqnarray*}

The matrix $S$ is called the scattering matrix and contains the nonlinear
Fourier coefficients \cite{ablowitz2006sai,faddeev2007hmi}. It is a function
of $q(t,\cdot)$ and says how the
solution to (\ref{eq:dv-dt}) is 
scattered from $t=-\infty$ to $t=+\infty$.   More precisely, the
field $v^2(-\infty,\lambda)=(1,0)^Te^{-j\lambda t}$ is applied
at $t=-\infty$, where $q$ is absent. This field
evolves forward in time according to (\ref{eq:dv-dt}),
interacts with the signal (which can be viewed as an
``obstacle'') at finite values of $t$, and subsequently propagates 
towards $t=+\infty$, where again $q$ is absent. The field at
$t=+\infty$ is measured and gives information about the ``obstacle''
as seen from a distance.  Although not obvious from the
development so far, we shall see in Section~\ref{sec:inft}
that the information measured at $t = +\infty$,
captured by $a(\lambda)$ and $b(\lambda)$,
is \emph{complete}, in the sense that from this information we
can retrieve $q(t,\cdot)$ entirely.
In view of this interpretation,
the nonlinear Fourier transform
was historically referred to as the \emph{inverse scattering transform}.

Note that, taking the determinant of the both sides of \eqref{eq:S-definition}, 
$\det S=a(\lambda)a^*(\lambda^*)+b(\lambda)b^*(\lambda^*)=1$.

\subsection{The Nonlinear Fourier Transform}

The projection equations \eqref{eq:proj-1} and \eqref{eq:proj-2}
that give $a(\lambda)$ and $b(\lambda)$
are well-defined if $\lambda\in\Reals$.
From Lemma~\ref{lem:basis}, Property 1), we
observe that the eigenspace is symmetric in 
$\lambda$, \ie, if $\lambda$ is an eigenvalue then
so is $\lambda^*$.
Thus it is sufficient to consider the upper half complex
plane $\Complex^+$. In this region,
the boundary conditions on the basis vectors
$v^1$ and $\tilde{v}^1$ at $t=\infty$ decay and blow up, respectively.
As a result, \eqref{eq:proj-1} is consistent in $\Complex^+$
only if $a(\lambda)=0$.
Eigenvalues in $\Complex^+$ are therefore identified as
the zeros of the complex function
$a(\lambda)$ and they form the discrete (point) spectrum of the signal.
We will see in the next section
that the discrete spectrum corresponds to soliton pulses. 

\begin{lemma}
\label{lem:a-analyticity}
If $q(t)\in L^1(\Reals)$,
 $a(\lambda)$ is an analytic function of $\lambda$ on
$\Complex^+$.
\end{lemma}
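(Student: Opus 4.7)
The plan is to represent $a(\lambda)$ via a convergent Neumann (Picard) series built from the Volterra integral formulation of the $P$-equation, and to verify analyticity term by term.

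First, I would absorb the oscillatory asymptotic of $v^2$ into a new function. Set
\[
\Phi(t,\lambda)=\begin{pmatrix}e^{j\lambda t}v^2_1(t,\lambda)\\ e^{-j\lambda t}v^2_2(t,\lambda)\end{pmatrix},
\]
so that the boundary condition \eqref{eq:bound-cond-2} becomes $\Phi(t,\lambda)\to[1,0]^T$ as $t\to-\infty$, and the $P$-equation reduces to the simple coupled system
\[
\Phi_1'(t)=q(t)\,e^{2j\lambda t}\,\Phi_2(t),\qquad \Phi_2'(t)=-q^*(t)\,e^{-2j\lambda t}\,\Phi_1(t).
\]
Integrating from $-\infty$ and substituting $\Phi_2$ into $\Phi_1$ yields the Volterra equation
\[
\Phi_1(t,\lambda)=1-\int_{-\infty}^{t}\!\!\int_{-\infty}^{s_1}q(s_1)q^*(s_2)\,e^{2j\lambda(s_1-s_2)}\Phi_1(s_2,\lambda)\,\der s_2\,\der s_1.
\]
Second, I would recall that by time-independence of $\langle v^2,v^1\rangle_s$ and the boundary \eqref{eq:bound-cond-1}, one has $a(\lambda)=\lim_{t\to+\infty}\Phi_1(t,\lambda)$, so it suffices to prove that this limit is analytic in $\mathbb{C}^+$.

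Next I would solve the Volterra equation by Picard iteration, writing $\Phi_1=\sum_{n\ge 0}\Phi_1^{(n)}$, with $\Phi_1^{(0)}\equiv 1$ and
\[
\Phi_1^{(n)}(t,\lambda)=(-1)^n\!\!\!\!\int\limits_{-\infty<s_{2n}<\cdots<s_1<t}\!\!\!\! \prod_{k=1}^{n}q(s_{2k-1})q^*(s_{2k})\,e^{2j\lambda(s_{2k-1}-s_{2k})}\,\der s_1\cdots\der s_{2n}.
\]
Because the integration region enforces $s_{2k-1}>s_{2k}$, the exponent satisfies
\[
\bigl|e^{2j\lambda(s_{2k-1}-s_{2k})}\bigr|=e^{-2\,\Im(\lambda)(s_{2k-1}-s_{2k})}\le 1\quad\text{for }\lambda\in\overline{\mathbb{C}^+}.
\]
Consequently $|\Phi_1^{(n)}(t,\lambda)|\le \|q\|_1^{2n}/(2n)!$ times an appropriate combinatorial factor that still yields summability. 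So the series converges absolutely and uniformly in $(t,\lambda)\in\mathbb{R}\times\overline{\mathbb{C}^+}$.

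Third, I would check analyticity of each term. For fixed $t$, $\Phi_1^{(n)}(t,\cdot)$ is a finite iterated integral whose integrand is, for each fixed tuple $(s_1,\dots,s_{2n})$, an entire function of $\lambda$ (a product of exponentials). Since the integrand is dominated by the $L^1$-integrable function $\prod|q(s_{2k-1})q^*(s_{2k})|$ uniformly in $\lambda$ on compacta of $\overline{\mathbb{C}^+}$, differentiation under the integral (or Morera plus Fubini) gives analyticity of $\Phi_1^{(n)}(t,\cdot)$ on $\mathbb{C}^+$. Likewise the limit $t\to+\infty$ is a uniform limit (uniformity in $\lambda$ follows from the same $L^1$-bound with $t=+\infty$), so $\lim_{t\to\infty}\Phi_1^{(n)}(t,\lambda)$ is analytic on $\mathbb{C}^+$. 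A uniformly convergent series of analytic functions being analytic (Weierstrass), we conclude that $a(\lambda)$ is analytic on $\mathbb{C}^+$.

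The main obstacle is managing the interchange of the $t\to\infty$ limit with the infinite Neumann summation, since we are only assuming $q\in L^1(\mathbb{R})$ with no pointwise decay rate. The key to overcoming this is precisely the sign of $\Im(\lambda)$ in $\mathbb{C}^+$, which converts the oscillatory kernel into a bounded one on the Volterra simplex $s_{2k-1}>s_{2k}$, so that all bounds depend only on $\|q\|_1$ and are uniform in $t$ and in $\lambda$ on compacta of $\overline{\mathbb{C}^+}$.
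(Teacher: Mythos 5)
Your proof is correct and follows essentially the same route as the paper: the paper reduces the lemma to the analyticity of the scaled Jost solutions (Lemma~\ref{lem:analyticity}), which it proves in Appendix~\ref{sec:analyticity} by exactly the Picard/Neumann iteration on the Duhamel--Volterra integral equation that you write out, with the key point in both cases being that $\Im(\lambda)>0$ makes the kernel $e^{2j\lambda(s_{2k-1}-s_{2k})}$ bounded on the ordered simplex, so the series converges uniformly and Weierstrass applies. The only cosmetic difference is that you extract $a(\lambda)$ as the $t\to+\infty$ limit of $e^{j\lambda t}v_1^2$ (formula \eqref{eq:a-formula}) rather than via the $t$-independent symplectic pairing $\inners{v^2}{v^1}$, and you are somewhat more careful than the paper about the uniformity in $t$ needed to pass to that limit.
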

\emph{Proof}:
From Lemma~\ref{lem:analyticity} in Section~\ref{sec:inft},
if $q(t)\in L^1(\Reals)$ the scaled canonical eigenvectors 
$v^1(t,\lambda) e^{-j\lambda t}$ and $v^2(t,\lambda) e^{j\lambda t}$
are analytic functions of $\lambda$ in $\Complex^+$.
Since
\begin{equation*}
a(\lambda)=\inners{v^2}{v^1}=\inners{v^2e^{j\lambda t}}{v^1e^{-j\lambda t}},
\end{equation*}
is a combination
of two analytic functions in $\Complex^+$,
it is analytic in the same region. 
(Note, however, that $b(\lambda)$, which is
a combination of functions analytic in disjoint regions in $\Complex$,
may not be analytic in either of those regions.)
\qed

A consequence of Lemma~\ref{lem:a-analyticity}
is that the zeros of $a(\lambda)$ in $\Complex^+$ are isolated
points \cite{stein2003ca}.
It follows that the 
Zakharov-Shabat operator for the NLS equation has two
types of spectra. The discrete (or point) spectrum, which occurs in
$\Complex^+$, is characterized by those $\lambda_j \in \Complex^+$
satisfying
\[
a(\lambda_j)=0,~j=1,2,\ldots,N.
\]
The discrete spectrum corresponds to solitons, and
in this case (\ref{eq:proj-1}) reduces to
\[
v^2(t,\lambda_j)=b(\lambda_j)v^1(t,\lambda_j).
\]
The continuous spectrum,
which in general includes the whole
real line $\Im(\lambda)=0$, corresponds to the non-solitonic
(or radiation) component of the
signal. The continuous spectrum is the component
of the NFT which corresponds to the ordinary
Fourier transform, whereas the discrete spectrum
has no analogue in linear systems theory.
The reader is referred to
Appendix~\ref{app:stso} for a number of examples
illustrating various notions of the spectrum associated
with bounded linear operators.

To distinguish between the discrete and continuous
spectra, we find it convenient to refer to discrete
spectral values of $\lambda$ using the symbol $\lambda_j$ (with a subscript).
Continuous spectral values are denoted as $\lambda$ (without
a subscript).  In general, $\lambda_j \in \Complex^+$,
whereas $\lambda \in \mathbb{R}$.
  
For the purpose of developing the inverse transform,
we find it sufficient to work with the ratios
\[
\hat{q}(\lambda)=\frac{b(\lambda)}{a(\lambda)},
\quad
\tilde{q}(\lambda_j)=\frac{b(\lambda_j)}{a_{\lambda}(\lambda_j)},
\]
where $a_{\lambda}(\lambda_j)$ denotes the derivative $\left.\der a(\lambda)/\der \lambda\right|_{\lambda=\lambda_j}$.

We can now formally define the nonlinear
Fourier transform of a signal with respect to
the Lax operator $L$ \eqref{eq:L-op} as follows. 

\begin{definition}[Nonlinear Fourier transform
\cite{tao2006nfa,ablowitz2006sai}]
Let $q(t)$ be a sufficiently smooth function
satisfying Assumptions~\ref{assumptions:q}.
The nonlinear Fourier transform of $q(t)$ with respect to the Lax
operator $L$ \eqref{eq:L-op} consists of the continuous and
discrete spectral functions
$\hat{q}(\lambda):\,\Reals\mapsto \Complex$ and $\tilde{q}(\lambda_j):\,\Complex^+\mapsto\Complex$
where
\begin{IEEEeqnarray*}{rCl}
\hat{q}(\lambda)=\frac{b(\lambda)}{a(\lambda)},\quad
\tilde{q}(\lambda_j)=\frac{b(\lambda_j)}{a_{\lambda}(\lambda_j)}, \quad
j=1,2,\ldots,N,
\end{IEEEeqnarray*}
in which
$\lambda_j$ are the zeros of $a(\lambda)$.
Here, the spectral coefficients $a(\lambda)$ and $b(\lambda)$
are given by
\begin{IEEEeqnarray}{rCl}
a(\lambda)&=&\lim\limits_{t\rightarrow\infty}v^2_1e^{j\lambda t},
\IEEEyesnumber\IEEEyessubnumber\label{eq:a-formula}
\\
b(\lambda)&=&\lim\limits_{t\rightarrow\infty}v^2_2e^{-j\lambda t},
\IEEEyessubnumber\label{eq:b-formula}
\end{IEEEeqnarray}
where $v^2(t,\lambda)$ is a solution of (\ref{eq:dv-dt})
under the boundary condition (\ref{eq:bound-cond-2}).\qed
\end{definition}

To obtain the continuous spectral function $\hat{q}(\lambda)$,
$\lambda \in \mathbb{R}$, it is not
necessary to find $a(\lambda)$ and $b(\lambda)$ separately.
For convenience,
one can instead write an explicit differential equation
for $y(t,\lambda) = (v_2^2 / v_1^2 ) \exp(-2j \lambda t)$:
\begin{IEEEeqnarray}{Cl}
& \frac{\der y(t,\lambda)}{\der t}+q(t)e^{2j\lambda
  t}y^2(t,\lambda)+q^*(t)e^{-2j\lambda t}=0,
\label{eq:qhat-ode}
\\*[-0.5\normalbaselineskip]
\smash{\left\{
\IEEEstrut[7\jot]
\right.} \nonumber
\\*[-0.5\normalbaselineskip]
& 
y(-\infty,\lambda)=0,\nonumber
\end{IEEEeqnarray}
and obtain 
$\hat{q}(\lambda)=\lim_{t\rightarrow\infty}y(t,\lambda)$. 

Analogously, following \eqref{eq:a-formula}, one
can solve the second-order differential equation 
\begin{IEEEeqnarray}{Cl}
&
\frac{\der^2z(t,\lambda)}{\der
  t^2}-\left(2j\lambda+\frac{q_t}{q}\right)\frac{\der
  z(t,\lambda)}{\der t}+|q|^2z(t,\lambda) =0,\IEEEeqnarraynumspace
\label{eq:a-ode}\\*
[-0.2\normalbaselineskip]
\smash{\left\{
\IEEEstrut[8\jot]
\right.} \nonumber
\\*[-0.2\normalbaselineskip]
&
z(-\infty,\lambda)=1,\quad \frac{\der z(-\infty,\lambda)}{\der t}=0,\nonumber
\end{IEEEeqnarray}
and obtain $a(\lambda)=\lim\limits_{t\rightarrow\infty}z(t,\lambda)$. 
The discrete spectrum is obtained as
the zeros of $a(\lambda)$.

Just like the ordinary Fourier transform, the nonlinear Fourier
transform can be computed analytically
only in a few cases. An example is given in the next subsection. 

\subsection{Example:  Nonlinear Fourier Transform of a Rectangular Pulse}

Consider the rectangular pulse
\begin{IEEEeqnarray*}{rCl}
q(t)=\begin{cases}
A, & t \in [t_1,t_2], \\
0, & \textnormal{otherwise}.
\end{cases}
\end{IEEEeqnarray*}
Let $T=t_2-t_1$ and $T^\prime=t_2+t_1$.

In this case $P(\lambda,q)$ is time-independent when $t \in [t_1,t_2]$,
and \eqref{eq:dv-dt} under the boundary condition \eqref{eq:bound-cond-2}
can be easily solved in closed form.
The canonical eigenvector $v^2$ is given by
\[
v^2(t,\lambda) = \exp[(t-t_1)P] v^2(t_1,\lambda),
\quad v^2(t_1,\lambda) =
\begin{pmatrix} 1 \\ 0 \end{pmatrix} e^{-j \lambda t_1}.
\]
It follows that
\begin{IEEEeqnarray*}{rCl}
v^2(\infty,\lambda) =v^2(t_2,\lambda) =  \exp(PT) v^2(t_1,\lambda),
\end{IEEEeqnarray*}
where
\begin{IEEEeqnarray*}{ll}
\exp&(PT) = 
\exp\left\{
\begin{pmatrix}
-j\lambda & q\\
-q^* & j\lambda
\end{pmatrix}T
\right\}\\
&
=\begin{pmatrix}
\cos(\Delta T)-j\frac{\lambda}{\Delta}\sin(\Delta T) & \frac{A}{\Delta}\sin(\Delta T)\\
  \frac{-A^*}{\Delta}\sin(\Delta T)& 
\cos(\Delta T)+j\frac{\lambda}{\Delta}\sin(\Delta T)
\end{pmatrix},
\end{IEEEeqnarray*}
with $\Delta=\sqrt{\lambda^2+|A|^2}$. The spectral coefficients are
obtained from (\ref{eq:a-formula}) and (\ref{eq:b-formula}) as
\begin{IEEEeqnarray*}{rCl}
a(\lambda) & =& \left( \cos(\Delta T)-j\frac{\lambda}{\Delta}\sin(\Delta T)\right)e^{j\lambda T},\\
b(\lambda) &=& \frac{-A^*}{\Delta}\sin(\Delta T)e^{-j\lambda T^\prime}.
\end{IEEEeqnarray*}
The zeros of $a(\lambda)$ in $\Complex^+$, which satisfy
\begin{IEEEeqnarray*}{rCl}
j\tan(T\sqrt{|A|^2+\lambda^2})=\sqrt{1+\frac{|A|^2}{\lambda^2}},
\end{IEEEeqnarray*}
give rise to the discrete spectrum.
The continuous spectrum is given by
\begin{IEEEeqnarray*}{rCl}
	\hat{q}(\lambda)=\frac{A^*}{j\lambda}
e^{-2j\lambda t_2}\left( 1-\frac{\Delta}{j\lambda}\cot(\Delta T)\right)^{-1}.
\end{IEEEeqnarray*}

Note that as $A\rightarrow 0$, $\Delta\rightarrow \lambda$, and one can see that in the limit of $AT\ll 1$ there is no discrete spectrum.  Furthermore,
the continuous spectrum tends to
\begin{IEEEeqnarray*}{rCl}
	\hat{q}(\lambda)=-A^*Te^{-j2\pi f
	T^\prime}\sinc(2Tf), \quad \lambda=2\pi f,
\IEEEeqnarraynumspace
\end{IEEEeqnarray*}
which is just the ordinary Fourier transform of the $-q^*(t)$ with $f\rightarrow 2f$. 

\begin{figure*}[t]
\centering
\includegraphics[width=\textwidth]{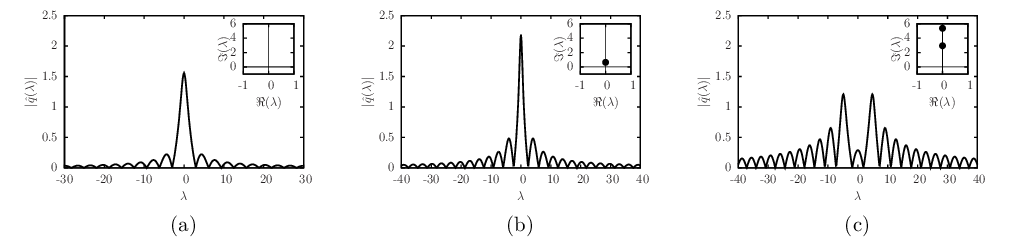}
\caption{Discrete and continuous
spectra of the square wave signal with $T=1$ and (a) $A=1$,
(b) $A=2$, (c) $A=6$.}
\label{fig:sw}
\end{figure*}

Fig.~\ref{fig:sw} 
shows the two spectra for $T=1$ and various values of $A$.
For small $A$, there is no discrete spectrum and the continuous
spectrum is essentially just the ordinary Fourier transform of $q(t)$.
As $A$ is increased, the continuous spectrum deviates from the ordinary Fourier transform and one or more discrete mass points appear on the 
$j\omega$ axis. 

\subsection{Elementary Properties of the Nonlinear Fourier Transform}
\label{subsec:elemprop}

Let $q(t)\leftrightarrow \left(\widehat{q}(\lambda),\widetilde{q}(\lambda_k)\right)$ be a nonlinear Fourier transform pair. The following properties
are proved in Appendix~\ref{sec:proof-prop-nft}:
\begin{enumerate}
\item (The ordinary Fourier transform as
limit of the nonlinear Fourier transform):  If $||q||_{L^1} \ll 1$, there is no discrete spectrum and $\widehat{q}(\lambda)\rightarrow Q(\lambda)$, where $Q(\lambda)$
is the ordinary (linear) Fourier transform of $-q^*(t)$
\begin{equation*}
Q(\lambda)=-\int\limits_{-\infty}^{\infty}q^*(t)e^{-2j\lambda t}\der t;
\end{equation*}

\item (Weak nonlinearity): If $|a|\ll 1$, then
$\widehat{aq}(\lambda) \approx a\widehat{q}(\lambda)$.
In general, however, $aq(t)\nleftrightarrow a(\hat q(\lambda),\tilde q(\lambda_k))$;

\item (Constant phase change): 
$e^{j\phi}q(t)\leftrightarrow e^{-j\phi}(\hat{q}(\lambda), \tilde{q}(\lambda_k))$;

\item (Time dilation): 
$q(\frac{t}{a})\leftrightarrow (\widehat{|a|q}(a\lambda),\widetilde{|a|q}(a\lambda_k))$;

\item (Time shift): $q(t-t_0)\leftrightarrow e^{-2j\lambda t_0}\left(\widehat{q}(\lambda), \widetilde{q}(\lambda_k)\right)$;

\item (Frequency shift): $q(t)e^{-2j\omega t}\leftrightarrow \left(\widehat{q}(\lambda-\omega),\widetilde{q}(\lambda_k-\omega)\right)$;

\item (Lax convolution): If $q_2(t)=q_1(t)\convolution(L,M;\const{L})$, then
$\widehat{q_2}(\lambda)=H(\lambda,\const{L})\widehat{q_1}(\lambda)$ and
$\widetilde{q_2}(\lambda_k)=H(\lambda_k,\const{L})\widetilde{q_1}(\lambda_k)$.
For the NLS equation, the channel
filter is $H(\lambda,\const{L})=\exp(-4j \lambda^2 \const{L})$;

\item (Parseval's identity): 
$\int\limits_{-\infty}^{\infty} \|q(t)\|^2 \, \der t = \hat{E} + \tilde{E}$, where
\[
\hat{E}  = 
\frac{1}{\pi}\int\limits_{-\infty}^{\infty}\log\left(1+|\hat{q}(\lambda)|^2\right)\der \lambda, \quad
\tilde{E}  = 
4\sum\limits_{j=1}^{N}\Im\left(\lambda_j\right).
\]
The quantities $\hat{E}$ and $\tilde{E}$ represent
the energy contained in the continuous and discrete spectra, respectively.
\end{enumerate}

In addition, we have the following properties related to the nonlinear Fourier
transform:
\begin{enumerate}
\item (Causality and layer-peeling property):
Let $q_1(t)$ and $q_2(t)$ be two signals with non-overlapping
support, \eg, signals in a pulse train. Without loss of
generality, assume that $q_1(t)$ is supported on
$t \leq t_0$, and that $q_2(t)$ is supported on $t > t_0$.
If $(a_1, b_1)$ and $(a_2, b_2)$ are,
respectively, the nonlinear Fourier coefficients of $q_1(t)$ and
$q_2(t)$, then the nonlinear Fourier coefficients
of $q_1(t)+q_2(t)$ are given by 
\begin{IEEEeqnarray*}{rCl}
(a(\lambda),b(\lambda)) & = & (a_1(\lambda),b_1(\lambda)) \circ (a_2(\lambda),b_2(\lambda)) \\
&\stackrel{\Delta}{=}& \bigl (a_1(\lambda)
a_2(\lambda)-b_1(\lambda)b_2^*(\lambda^*), \\
 & & \quad a_1(\lambda) b_2(\lambda)+b_1(\lambda)a_2^*(\lambda^*)\bigl).
\end{IEEEeqnarray*} 
 That is to say, if we slice the
 signal in time in
 consecutive portions according to a mesh $-\infty<\cdots<t_{-1}<t_0<t_1<\cdots<\infty$, the nonlinear Fourier coefficients
satisfy the Markov property
\begin{multline*}
\left(a_{k}(\lambda), b_{k}(\lambda); q(t), t\in(t_k,t_{k+1})
\right)\\
\rightarrow \left(a_{k+1}(\lambda), b_{k+1}(\lambda)\right),
\end{multline*}

where $\left(a_k(\lambda), b_k(\lambda)\right)$ are the spectral
coefficients calculated from $q(t)$ in $-\infty<t<t_{k}$;

\item (Purely imaginary eigenvalues) If $||q||_{L^1}<\frac{\pi}{2}$,
  then there is no discrete spectrum. If $q(t)\in L^1(\Reals)$ is real, non-negative,
  piece-wise smooth, and single-lobe (non-decreasing for $t<t_0$ and non-increasing for
  $t>t_0$), then there are precisely $N=
  \lfloor\frac{1}{2}+\frac{||q(t)||_{L^1}}{\pi}-\epsilon\rfloor$ eigenvalues,
  all purely imaginary and simple \cite{klaus2003ezss}. 
\end{enumerate}

\section{Evolution of the Nonlinear Fourier Transform} 

Derivation of the evolution of the nonlinear Fourier transform
of a signal propagating based on the NLS equation 
proceeds straightforwardly. As $q(t,z)$ propagates,
the eigenvalues of $L$ are preserved and the eigenvectors  of $L$
propagate based on 
\begin{IEEEeqnarray}{rCl}
v_z&=&M(\lambda,q)v \nonumber\\
& =& \setlength{\arraycolsep}{2pt}\begin{pmatrix} 2j\lambda^2-j|q(t,z)|^2 & -2\lambda q(t,z)-jq_t(t,z) \\
  2\lambda q^*(t,z)-jq_t^*(t,z) & -2j\lambda^2+j|q(t,z)|^2\end{pmatrix}v.
\IEEEeqnarraynumspace
\label{eq:dv-dz}
\end{IEEEeqnarray}

Assuming that $q(t,z)$ and its time-derivative vanish
at $t=\pm\infty$ for 
all $z\leq\const{L}$ during the propagation, then
as $t\rightarrow\infty$ \eqref{eq:dv-dz} is reduced to 
\begin{IEEEeqnarray}{rCl}
v_z(t,z)\rightarrow\begin{pmatrix} 2j\lambda^2 & 0 \\ 0 & -2j\lambda^2\end{pmatrix}v(t,z).
\label{eq:M-asymp}
\end{IEEEeqnarray}
Thus the boundary conditions \eqref{eq:bound-cond-1} 
and \eqref{eq:bound-cond-2} are transformed to  
\begin{IEEEeqnarray*}{rCl}
	v^1(t,\lambda)&\rightarrow& \begin{pmatrix}0\\1\end{pmatrix}e^{j\lambda t}e^{-2j\lambda^2 z},\quad
t\rightarrow +\infty, \\
v^2(t,\lambda)&\rightarrow& \begin{pmatrix}1\\0\end{pmatrix}e^{-j\lambda
	t}e^{2j\lambda^2 z}, \quad t\rightarrow -\infty.
\end{IEEEeqnarray*}
These transformed boundary conditions are not consistent with the boundary
conditions \eqref{eq:bound-cond-1} and \eqref{eq:bound-cond-2} used to define
the canonical eigenvectors, due to the additional factors
$e^{\pm 2j\lambda^2 z}$. As
a result, the evolution of the canonical eigenvectors from $z=0$, according to
$v_z=Mv$, does not lead to the canonical eigenvectors at $z=\const{L}$. However, by
proper scaling, one can obtain the canonical eigenvectors at any $z$.

For instance, focusing on $v^2(t,\lambda;z)$, and changing
variables to
$u^2(t,\lambda;z)=v^2(t,\lambda;z)e^{-2j\lambda^2 z}$, we
obtain $u^2_t=Pu^2$ with boundary condition \eqref{eq:bound-cond-2} (at $t=-\infty$) for all
$z$. Consequently, $u^2(t,\lambda;z)$ is a canonical eigenvector for all $z$.
By transforming
\eqref{eq:M-asymp}, the evolution equation for $u$ is asymptotically
(at $t = +\infty$) given by
\begin{equation*}
u^2_z(\infty,\lambda;z)=
\begin{pmatrix}
0 & 0\\
0 & -4j\lambda^2 z
\end{pmatrix}u^2(\infty,\lambda;z),
\end{equation*}
which gives
\begin{IEEEeqnarray*}{rCl}
u_1^2(\infty,\lambda;z) & = & u_1^2(\infty,\lambda;0),\\
u_2^2(\infty,\lambda;z) & = & u_2^2(\infty,\lambda;0)e^{-4j\lambda^2 z}.
\end{IEEEeqnarray*}
Using expressions (\ref{eq:a-formula}) and (\ref{eq:b-formula}) we
obtain
\begin{IEEEeqnarray*}{rCl}
	a(\lambda,z)&=&\lim\limits_{t\rightarrow\infty}u_1^2(t,\lambda;z)e^{j\lambda t}\!=\!\lim\limits_{t\rightarrow\infty}u_1^2(t,\lambda;0)e^{j\lambda t}=a(\lambda,0),
\\
b(\lambda,z)&=&\lim\limits_{t\rightarrow\infty}u_2^2(t,\lambda;z)e^{-j\lambda t}\!=\!\lim\limits_{t\rightarrow\infty}u_2^2(t,\lambda;0)e^{-4j\lambda^2z}e^{-j\lambda t}
\\
&=&b(\lambda,0)e^{-4j\lambda^2 z}.
\end{IEEEeqnarray*}
In turn, the nonlinear Fourier transform propagates according to
\begin{IEEEeqnarray}{rCl}
\widehat{q(t,z)}(\lambda)&=&e^{-4j\lambda^2 z}\widehat{q(t,0)}(\lambda),\nonumber\\
\widetilde{q(t,z)}(\lambda_j)&=&e^{-4j\lambda_j^2 z}\widetilde{q(t,0)}(\lambda_j),\nonumber\\
\lambda_j(z)&=&\lambda_j(0),\qquad j=1,2,\ldots,N.
\label{eq:nft-evol}
\end{IEEEeqnarray}
Note that, since $a(\lambda,z)$ is preserved under the evolution (\ie,
is independent of $z$), the number of the discrete eigenvalues---which are
zeros of $a(\lambda)$---is also preserved.

In summary,
we see that the operation of the Lax convolution in
the nonlinear Fourier domain is described by a simple
multiplication (diagonal) operator, much in
the same way that the ordinary Fourier transform
maps $y(t)=x(t) \convolution h(t) $ to $Y(\omega)=X(\omega) \cdot
H(\omega)$. The channel filter (transfer function)
in \eqref{eq:nft-evol} is $\exp(-4j\lambda^2z)$.

\section{An Approach to Communication over Integrable Channels}

Since the nonlinear Fourier transform of a signal is
essentially preserved under Lax convolution, one can immediately
conceive of a nonlinear analogue of orthogonal frequency-division
multiplexing (OFDM) for communication over integrable channels.
We refer to this scheme as nonlinear frequency-division multiplexing. In this scheme, the input-output channel model is given by
\begin{IEEEeqnarray*}{rCl}
\hat{Y}(\lambda)&=&H(\lambda)\hat{X}(\lambda)+\hat{Z},\\
\tilde{Y}(\lambda_j)&=&H(\lambda_j)\tilde{X}(\lambda_j)+\tilde{Z}_j,
\end{IEEEeqnarray*}
where $\hat{X}(\lambda)=\hat{q}(\lambda,0)$ and
$\tilde{X}(\lambda_j)=\tilde{q}(\lambda_j,0)$ are spectra at the input of the
channel, $\hat{Y}(\lambda)=\hat{q}(\lambda,z)$ and
$\tilde{Y}(\lambda_j)=\tilde{q}(\lambda_j,z)$ are spectra at the output of the
channel,
and the channel filter is \begin{IEEEeqnarray*}{rCl}
H(\lambda)=e^{-4j\lambda^2 z}.
\end{IEEEeqnarray*} 
Here $\hat{Z}$ and $\tilde{Z}_j$ are effective noise in the spectral domain.
Bandlimited Gaussian noise in the time domain generally transforms to
non-Gaussian noise in the spectral domain. The noise variables affecting distinct eigenvalues are in general
correlated and a function of the entire signal spectrum. Exact
statistical description of the noise in the spectral domain is in general involved;
however when noise is small a perturbation approach can easily be taken
[Part~III].

The proposed scheme consists of two steps.
\begin{itemize}
\item \emph{The inverse nonlinear Fourier transform at the transmitter
    (INFT).} At the transmitter, information is encoded in the nonlinear
spectra of the signal according to a suitable constellation on $\left(\hat{X}(\lambda),\tilde{X}(\lambda_j)\right)$. The time domain signal is generated by taking the inverse
nonlinear Fourier transform, 
\[
q(t)=\mathrm{INFT}\left(\hat{X}(\lambda),\tilde{X}(\lambda)\right),
\]
and is sent over the channel.  (The INFT is described
formally in the next section.)
\item \emph{The forward nonlinear Fourier transform at the receiver (NFT).}
At the receiver, the (forward) nonlinear Fourier transform of the
signal,
\[
\left(\hat{Y}(\lambda),\tilde{Y}(\lambda)\right)=\mathrm{NFT}(q(t,z))
\]
is taken and the resulting spectra are compared against the
transmitted spectra according to some metric $d\left(\hat{X}(\lambda),\tilde{X}(\lambda);\hat{Y}(\lambda),\tilde{Y}(\lambda)\right)$. 
\end{itemize}

As $q(t,0)$ propagates in a communication network
modeled by a complicated integrable equation, it is significantly distorted
and undergoes inter-symbol interference (ISI) and
inter-channel interference. In the spectral domain, in the absence of
noise, all the nonlinear spectral components propagate independent of each other and
the channel is decomposed into a number of linear parallel independent
channels. By diagonalizing the channel
in this way, the deterministic ISI and inter-channel interference are removed in the spectral domain. 

In this scheme, as in linear OFDM,
communication objectives, such as constellation design, coding and
modulation are entirely formulated in the spectral
domain. All available degrees-of-freedom, \ie,
$\{\lambda_j,\tilde{X}(\lambda_j),\hat{X}(\lambda)\}$ can be generally
modulated.  Time domain constraints can be translated to constraints in
the spectral domain. A power constraint, for instance, can be exactly
transformed to a power constraint in the spectral domain with the help
of the Parseval's identity.
A bandwidth limitation is not directly and simply expressed
in the nonlinear spectral domain;  however, it appears that
the nonlinear spectrum of a signal bandlimited to $W$ is indeed confined,
approximately, to a vertical strip in the real line and the upper half
complex plane, with a width depending on $W$.

\begin{figure}[tbp]
\centering
\ifCLASSOPTIONonecolumn
\includegraphics{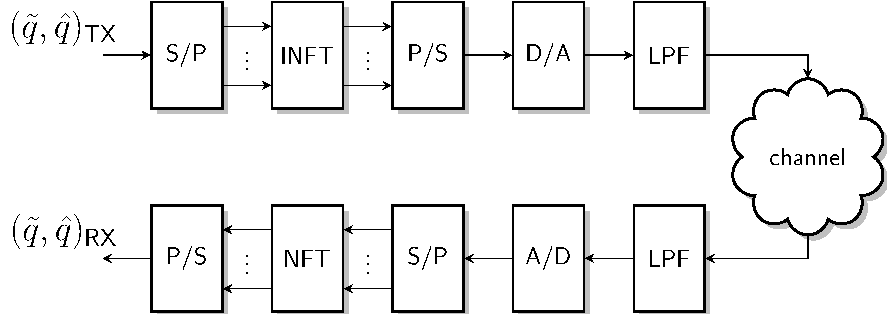}
\else
\includegraphics[width=\columnwidth]{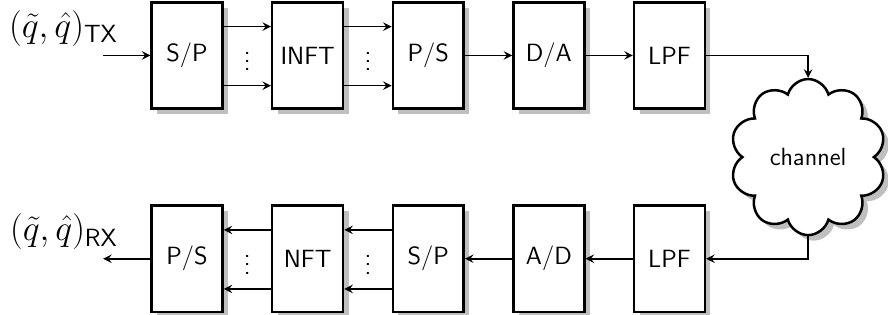}
\fi
\caption{Nonlinear frequency-division multiplexing system architecture.} 
\label{fig:nfdm-model}
\end{figure}

Fig.~\ref{fig:nfdm-model} illustrates the NFDM
channel model that we use in this paper (and its subsequent parts) for
data transmission over integrable channels such as optical
fibers. Although in this paper we introduced the continuous-time NFT
on the real line $t\in\Reals$, for practical implementation
the nonlinear Fourier transform of discrete-time signals with
periodic boundary conditions should be used; see [Part~III] and
references therein.

\section{Inverse Nonlinear Fourier transform}
\label{sec:inft}

In this section, we describe the inverse nonlinear Fourier
transform, which is a method for recovering the signal
$q(t,\cdot)$ from its nonlinear Fourier transform $(\hat{q}(\lambda),
\tilde{q}(\lambda_j))$.
This is the opposite of what was done in the
Section~\ref{sec:nft}, and it gives further important
insight into how the NFT works. As in
Section~\ref{sec:nft}, the value of $z$ is irrelevant here and the
index $z$ is thus suppressed.

\subsection{Riemann-Hilbert Factorization}

In the 1950s, before the publication of Gardner
\cite{gardner1967msk},
a method for retrieving the potential
$q(t)$ in the linear Schr\"odinger equation from the
knowledge of the scattering matrix $S$ was already known in
quantum mechanics.
It was later realized
that this method is an instance of a
\emph{Riemann-Hilbert problem} in complex analysis
\cite{tao2006nfa,faddeev2007hmi}.
Alternatively, the same task can be accomplished
using the \emph{Gelfand-Levitan method} \cite{faddeev2007hmi},
a different approach developed earlier in the
context of inverse problems for Sturm-Liouville differential equations.
Although
either of these methods can be used for solving the inverse problem,
in this paper we will use the Riemann-Hilbert method. 

To begin, canonical eigenvectors can be found in two ways. On the one
hand, they are
related to $q(t)$ through the Zakharov-Shabat system \eqref{eq:dv-dt}.
One can solve
$\eqref{eq:dv-dt}$ to explicitly express
the canonical eigenvectors as a series involving
$q(t)$. On the other hand, canonical eigenvectors are related to the
nonlinear Fourier transform via the projection equations
\eqref{eq:proj-1}--\eqref{eq:proj-2}. The latter are two equations for four
unknowns and in general cannot be solved.
However, the unique properties
of the canonical eigenvectors, which are analytic in disjoint regions of the complex
plane, will allow us to find them in terms of the nonlinear
Fourier transform. Note that this second
derivation does not explicitly depend on $q(t)$; however,
by equating the canonical eigenvectors obtained
from these two derivations,
we can relate the signal $q(t)$ to its nonlinear
Fourier transform ($\hat{q}(\lambda),\tilde{q}(\lambda_j))$.

We find it helpful to briefly introduce the tool that we employ in this
section, namely the Riemann-Hilbert factorization problem in
complex analysis.

\begin{definition}[Riemann-Hilbert factorization]
The Riemann-Hilbert factorization problem consists of
finding two complex functions
$f^{-}(z)$ and $f^{+}(z)$, respectively, analytic inside and outside of a
closed contour $C$ in the complex plane, such that on $C$ they satisfy
the boundary condition $f^{+}(z)=g(z)f^-(z)$ for some given $g(z)$.
\end{definition}

It $g(z)$ satisfies a H\"older condition, it is possible to find both $f^{+}(z)$ and $f^{-}(z)$ everywhere in
the complex plane from the given boundary condition. In
Appendix~\ref{app:rhp}, we provide a brief overview of 
this problem and its solution.  See
\cite{musk2008} for more discussion. 

\subsection{The Inverse Transform}

The inverse transform maps the two spectral functions 
$\left(\hat{q}(\lambda),\tilde{q}(\lambda_j)\right)$ to the signal
$q(t)$.

As noted,
we need to
express the canonical eigenvectors in terms of the
nonlinear Fourier transform. Below we will find
it more convenient to work with canonical eigenvectors subject to fixed
boundary conditions. Scaling the canonical eigenvectors as
\begin{IEEEeqnarray}{rCl}
\quad V^1 &=& v^1e^{-j\lambda t},
\quad \tilde{V}^1(t,\lambda^*)=\tilde{v}^1(t,\lambda^*)e^{j\lambda t},
\nonumber
\\
V^2&=&v^2e^{j\lambda t}, \quad \:\:\,
\tilde{V}^2(t,\lambda^*)=\tilde{v}^2(t,\lambda^*)e^{-j\lambda t}, 
\label{eq:scaled-eigvecs}
\end{IEEEeqnarray}
the projection equations \eqref{eq:proj-1} and \eqref{eq:proj-2} are
transformed to
\begin{IEEEeqnarray}{rCl} 
V^2(t,\lambda)&=&a(\lambda)\tilde{V^1}(t,\lambda^*)+b(\lambda)e^{2j\lambda
  t}V^1(t,\lambda) 
\IEEEyesnumber\IEEEyessubnumber\label{eq:MN-rel-1}, \\
\tilde{V^2}(t,\lambda^*)&=&-b^*(\lambda^*)e^{-2j\lambda
  t}\tilde{V^1}(t,\lambda^*)+a^*(\lambda^*)V^1(t,\lambda).
\IEEEyessubnumber
\IEEEeqnarraynumspace
\label{eq:MN-rel-2}
\end{IEEEeqnarray}  

\begin{lemma}
\label{lem:analyticity}
If $q\in L^1(\Reals)$, then $V^1(t,\lambda)$ and $V^2(t,\lambda)$ are analytic functions of $\lambda$ in the
upper half complex plane $\Complex^+$ while $\tilde{V}^1(t,\lambda^*)$ and $\tilde{V}^2(t,\lambda^*)$ are analytic functions of $\lambda$ in 
the lower half complex plane.
\end{lemma}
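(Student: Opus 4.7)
The plan is to convert the Zakharov--Shabat system \eqref{eq:dv-dt}, together with the appropriate asymptotic boundary condition, into a Volterra integral equation for each scaled canonical eigenvector, and then to establish analyticity through a Picard (Neumann) iteration whose uniform convergence is driven entirely by $\|q\|_{L^1}$.

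Consider $V^2(t,\lambda)=v^2(t,\lambda)e^{j\lambda t}$ first. The scaling absorbs the oscillatory factor so that $V^2(-\infty,\lambda)=(1,0)^T$ independently of $\lambda$, and \eqref{eq:dv-dt} becomes $V^2_t = (P+j\lambda I)V^2$, whose off-diagonal entries are $q$ and $-q^*$ and whose only $\lambda$-dependence is a single $2j\lambda$ on the $(2,2)$ diagonal. Integrating the second component with the integrating factor $e^{-2j\lambda t}$ and back-substituting into the first gives a scalar Volterra equation
\[
V^2_1(t,\lambda) = 1 - \int_{-\infty}^t q(s)\int_{-\infty}^s q^*(r)\,e^{2j\lambda(s-r)}\,V^2_1(r,\lambda)\,\der r\,\der s,
\]
with $V^2_2$ then recovered by one further quadrature. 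The structural point is that for $\lambda\in\overline{\mathbb{C}^+}$ and $s\geq r$ one has $|e^{2j\lambda(s-r)}|=e^{-2\Im(\lambda)(s-r)}\leq 1$, while the exponential is entire in $\lambda$.

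I would then iterate: set $V^{(0)}\equiv 1$ and $V^{(n+1)} = 1 + \mathcal{K}V^{(n)}$, where $\mathcal{K}$ is the double integral above. By induction, each $V^{(n)}$ is a finite sum of iterated integrals whose integrands are entire in $\lambda$, hence $V^{(n)}$ is analytic (indeed entire) on $\mathbb{C}^+$. Writing $I(t)=\int_{-\infty}^t|q(s)|\,\der s$ and using the bound on the exponential together with the elementary identity $\int|q(r)|I(r)^{k}\,\der r = I^{k+1}/(k+1)$, a straightforward induction yields
\[
|V^{(n+1)}(t,\lambda)-V^{(n)}(t,\lambda)| \;\leq\; \frac{I(t)^{2n+2}}{(2n+2)!} \;\leq\; \frac{\|q\|_{L^1}^{2n+2}}{(2n+2)!},
\]
uniformly in $t\in\mathbb{R}$ and $\lambda\in\overline{\mathbb{C}^+}$. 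Hence the Neumann series converges absolutely and uniformly on $\mathbb{R}\times\overline{\mathbb{C}^+}$ to $V^2_1$, and by Weierstrass's theorem on uniform limits of analytic functions, the limit is analytic in $\lambda$ on $\mathbb{C}^+$ (and continuous up to the real axis). The same argument, starting from $t=+\infty$ with boundary vector $(0,1)^T$ and integrating backwards in $t$, disposes of $V^1$; the sign in the exponent flips but $|e^{2j\lambda(t-s)}|\leq 1$ still holds because now $s\geq t$.

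For the adjoint vectors $\tilde{V}^1,\tilde{V}^2$, I would invoke Lemma~\ref{lem:basis}: $\tilde{v}$ satisfies the $P$-equation with $\lambda$ replaced by $\lambda^*$, so the identical Volterra/Picard construction applies with $\lambda\mapsto\lambda^*$, and the decay condition $\Im(\lambda^*)\geq 0$ translates into $\Im(\lambda)\leq 0$, delivering analyticity on $\mathbb{C}^-$. The main obstacle---and the only place the $L^1$ hypothesis on $q$ is essential---is securing the uniform Picard bound across the entire closed upper half-plane simultaneously; the pointwise estimate $|e^{2j\lambda(s-r)}|\leq 1$ available on $\overline{\mathbb{C}^+}$ lets $\|q\|_{L^1}$ carry the majorant argument, whereas weaker integrability on $q$ would force a restriction to strips $\Im(\lambda)\geq\delta>0$ and destroy control as $\lambda$ approaches the real axis.
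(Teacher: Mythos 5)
Your proposal is correct and follows essentially the same route as the paper: scale the eigenvectors to fix $\lambda$-independent boundary data, recast the Zakharov--Shabat system as a Volterra (Duhamel) integral equation, and run a Picard/Neumann iteration whose terms are analytic and whose convergence is uniform because $|e^{2j\lambda(s-r)}|\leq 1$ on $\overline{\mathbb{C}^+}$ for $s\geq r$ and $q\in L^1$. The only difference is cosmetic --- you eliminate one component to get a scalar second-order Volterra equation and supply the explicit factorial majorant $\|q\|_{L^1}^{2n+2}/(2n+2)!$, which the paper's appendix asserts but does not write out.
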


\emph{Proof:} See Appendix~\ref{sec:analyticity}.\qed

Rearranging \eqref{eq:MN-rel-1}--\eqref{eq:MN-rel-2} gives
\begin{multline}
\setlength{\arraycolsep}{2pt}
\left( \begin{array}{cc} V^1(t,\lambda) & V^2(t,\lambda) \end{array}\right)
 =  \left(
\setlength{\arraycolsep}{2pt}
\begin{array}{cc} \tilde{V^1}(t,\lambda^*) & \tilde{V^2}(t,\lambda^*)\end{array} \right)  \\
\times \:\left(
\setlength{\arraycolsep}{2pt}
 \begin{array}{cc} \frac{b^*(\lambda^*)}{a^*(\lambda^*)}e^{-2j\lambda t} & \frac{1}{a^*(\lambda^*)} \\
 \frac{1}{a^*(\lambda^*)} & \frac{b(\lambda)}{a^*(\lambda^*)}e^{2j\lambda t}\end{array} \right).
\label{eq:mrh} 
\end{multline}  
Since $\setlength{\arraycolsep}{2pt}\begin{pmatrix}V^1(t,\lambda) & V^2(t,\lambda)\end{pmatrix}$ and $\setlength{\arraycolsep}{2pt}\begin{pmatrix}\tilde{V^1}(t,\lambda^*)&
  \tilde{V^2}(t,\lambda^*)\end{pmatrix}$ are analytic, respectively, in the upper and lower half complex plane, \eqref{eq:mrh} defines a matrix Riemann-Hilbert problem.

Solution of the Riemann-Hilbert factorization problem \eqref{eq:mrh}
is given in Appendix~\ref{sec:sol-of-rh}. Omitting the details,
the following linear system of equations, known as the Riemann-Hilbert
system, is obtained:
\begingroup
\allowdisplaybreaks
\begin{IEEEeqnarray}{rCl}
\tilde{V^1}(t,\lambda^*)&=&\begin{pmatrix}1\\0\end{pmatrix}+\sum\limits_{i=1}^{N}\frac{\tilde{q}(\lambda_i)e^{2j\lambda_i
    t}V^1(t,\lambda_i)}{\lambda-\lambda_i}\IEEEnonumber\\
&&+\:
\frac{1}{2\pi j}\int\limits_{-\infty}^{\infty}\frac{\hat{q}(\zeta)e^{2j\zeta
    t}V^1(t,\zeta)}{\zeta-(\lambda-j\epsilon)}\der \zeta ,
\IEEEyesnumber\IEEEyessubnumber 
\label{eq:rh-integ-eqs-a}
\\
V^1(t,\lambda)&=&\begin{pmatrix}0\\1\end{pmatrix}-\sum\limits_{i=1}^{N}\frac{\tilde{q}^*(\lambda_i^*)e^{-2j\lambda_i^*
    t}\tilde{V^1}(t,\lambda_i^*)}{\lambda-\lambda_i^*}\IEEEnonumber\\
&&+\:
\frac{1}{2\pi j}\int\limits_{-\infty}^{\infty}\frac{\hat{q}^*(\zeta)e^{-2j\zeta
    t}\tilde{V^1}(t,\zeta^*)}{\zeta-(\lambda+j\epsilon)}\der
\zeta, \IEEEeqnarraynumspace
\IEEEyessubnumber
\label{eq:rh-integ-eqs-b}
\\
\tilde{V^1}(t,\lambda_m^*)&=&\begin{pmatrix}1\\0\end{pmatrix}+\sum\limits_{i=1}^{N}\frac{\tilde{q}(\lambda_i)e^{2j\lambda_i
    t}V^1(t,\lambda_i)}{\lambda_m^*-\lambda_i}\IEEEnonumber\\
&&+\:
\frac{1}{2\pi j}\int\limits_{-\infty}^{\infty}\frac{\hat{q}(\zeta)e^{2j\zeta
    t}V^1(t,\zeta)}{\zeta-\lambda_m^*}\der \zeta, \IEEEyessubnumber 
\label{eq:rh-integ-eqs-c}
\\
V^1(t,\lambda_m)&=&\begin{pmatrix}0\\1\end{pmatrix}-\sum\limits_{i=1}^{N}\frac{\tilde{q}^*(\lambda_i^*)e^{-2j\lambda_i^*
    t}\tilde{V^1}(t,\lambda_i^*)}{\lambda_m-\lambda_i^*}\IEEEnonumber\\
&&+\:
\frac{1}{2\pi j}\int\limits_{-\infty}^{\infty}\frac{\hat{q}^*(\zeta)e^{-2j\zeta
    t}\tilde{V^1}(t,\zeta*)}{\zeta-\lambda_m}\der
\zeta , \IEEEyessubnumber
\IEEEeqnarraynumspace
\label{eq:rh-integ-eqs-d}
\end{IEEEeqnarray}
\endgroup
where $\lambda,\zeta\in\Reals$, $\lambda_m\in\Complex^+$ and  $m=1,\ldots, N$. This is a system of $2N+2$ linear equations
for $2N+2$ discrete and continuous canonical eigenvectors
$\{V^1(t,\lambda_m)\}_{m=1}^{N}$, $\{\tilde{V^1}(t,\lambda_m)\}_{m=1}^{N}$,
and $V^1(t,\lambda)$ and $\tilde{V^1}(t,\lambda)$ as a function of the Fourier
transform $(\hat{q}(\lambda),\tilde{q}(\lambda_j))$.

Next, the canonical eigenvectors are related to the signal $q(t)$. By
inspecting the Zakharov-Shabat system \eqref{eq:dv-dt}, it is
shown in Appendix~\ref{app:asymp-eigenvectors} that for $|\lambda|\gg 1$  
\begin{IEEEeqnarray}{rCl} 
V^1(t,\lambda)=\begin{pmatrix}
\frac{1}{2j\lambda}q(t)\\
  1+\frac{1}{2j\lambda}\int\limits_{t}^{\infty}|q(t)|^2 \der t 
\end{pmatrix}+O(\lambda^{-2}).
\label{eq:N-asym0}
\end{IEEEeqnarray}  

It now becomes easy to recover the signal $q(t)$ from the nonlinear
Fourier transform
$(\hat{q}(\lambda),\tilde{q}(\lambda_j))$. Eigenvector $V^1$ is related to $q$ via \eqref{eq:N-asym0} for
$|\lambda|\gg 1$, and is related to the nonlinear Fourier transform
via \eqref{eq:rh-integ-eqs-b}. Approximating $1/(\lambda-\zeta)$ and
$1/(\lambda-\lambda_i)$ by $1/\lambda+O(\lambda^{-2})$ in \eqref{eq:rh-integ-eqs-b} for
$|\lambda|\gg 1$ and comparing the resulting $V^1_1$ with $V^1_1$ in \eqref{eq:N-asym0}, we obtain
\begin{IEEEeqnarray}{rCl}
q^*(t)&=&2j\sum\limits_{i=1}^{N}
\tilde{q}(\lambda_i)e^{2j\lambda_i t}V^1_{2}(t,\lambda_i)\nonumber\\
&&-\:
\frac{1}{\pi}\int\limits_{-\infty}^{\infty}\hat{q}(\lambda)e^{2j\lambda t}V^1_{2}(t,\lambda)\der \lambda.
\label{eq:potential-c}
\end{IEEEeqnarray}
This represents $q(t)$ as a function of the nonlinear Fourier transform.

In summary,
given 
$(\hat{q}(\lambda),\tilde{q}(\lambda_j))$, we
first solve the Riemann-Hilbert system
to find the eigenvector $V^1$. For this purpose, one can discretize
the system 
\eqref{eq:rh-integ-eqs-a}--\eqref{eq:rh-integ-eqs-d} and solve a linear system of equations
of the form
\begin{IEEEeqnarray*}{rCl}
Ax=b,
\end{IEEEeqnarray*}
for appropriate $A$ and $b$. This is done for each fixed $t_i$ to
determine values of $V^l(t_i,\lambda)$ at all times. Then,
$V^1_2$, $\hat{q}$ and $\tilde{q}$ are substituted in
\eqref{eq:potential-c} to obtain the signal $q(t)$.

Note that this inverse transform is taken only once at the transmitter to
synthesize the desired pulse shapes. It is only the forward transform which
needs to be taken in real time at the receiver. 

\begin{remark}
  In this paper we fixed $L$ to be the AKNS operator \eqref{eq:P-def}. In general, however $L$ needs
  to have adequate analyticity properties so that the inverse NFT can be carried out according
  to a Riemann-Hilbert factorization. Thus the existence of a Lax pair alone is not sufficient for the NFT. 
\end{remark}

The mathematical tools developed in this paper are used in
Parts II and III
of this paper to present further details pertaining the application of the suggested scheme in
data transmission over integrable channels. 

In [Part~II] numerical methods are suggested to compute
the nonlinear Fourier transform of a signal with respect
to the AKNS Lax operator. We use the results of Parts~I and II
in [Part~III] to study the NFT in the presence
of the noise and illustrate the performance of the method
in actual fiber-optic systems.

\section{Conclusions}

The nonlinear Fourier transform of a signal with respect to an operator $L$ in
a Lax pair consists of continuous and discrete spectral functions
$\hat{x}(\lambda)$ and $\tilde{x}(\lambda_j)$, obtainable by solving the
eigenproblem for the $L$ operator. The NFT maps a Lax convolution to a
multiplication operator in the spectral domain. Using the nonlinear Fourier
transform, we propose a transmission scheme for integrable channels, termed
nonlinear frequency-division multiplexing, in which the information in encoded
in the nonlinear spectrum of the signal. The scheme is an extension of
conventional OFDM to integrable channels generated by Lax pair operators.  An example is the
optical fiber channel, in which signals propagate according to the nonlinear
Schr\"odinger equation. The class of integrable channels, though nonlinear and
often complicated, are somehow ``linear in disguise,'' and thus admit the
proposed nonlinear frequency-division multiplexing transmission scheme.


\appendices

\section{Spectrum of Bounded Linear Operators}
\label{app:stso}

When moving from finite-dimensional spaces (of \eg, matrices) to
infinite-dimensional spaces (of \eg, functions and operators), some results do
not carry over necessarily.  Here we recall a few useful results in functional
analysis \cite{reed1980mmm}. 

Let $\mathcal{H}$ be a Hilbert space, let $\mathcal{D}$ be a dense subset of
$\mathcal{H}$, and let $L: \mathcal{D}\rightarrow \mathcal{H}$ be an operator.

\begin{definition}
The \emph{adjoint} of $L$ is the operator $L^*$ whose
domain $\mathcal{D}^*$ consists of all $\psi \in \mathcal{H}$ for which there
exists an element $L^*\psi \in \mathcal{H}$ such that
\[
\inner{L^*\psi}{\phi} =\inner {\psi}{L\phi} ,\quad \forall \phi\in \mathcal{D}.
\]
\end{definition}

The operator $L$ is said to be \emph{self-adjoint} if
$\mathcal{D}=\mathcal{D}^*$ and $L^*=L$.   Ignoring domain subtleties (as is
the case for bounded operators), 
self-adjoint operators are the analogue
of Hermitian matrices. 

\begin{definition} Given an operator $L$ on a Hilbert space,
an operator 
$M$ is said to be the inverse of $L$ if
the domain of $M$ is the range of $L$, the range of $M$ is
the domain of $L$, and
$ML=I$ and $LM=I$. 
\end{definition}

We will restrict ourselves now to operators whose $\mathcal{D}$ is the entire space $\mathcal{H}$.

An operator $L:\mathcal{H}\rightarrow\mathcal{H}$
is invertible if it 1) is one-to-one 2) is onto 3) has bounded
inverse. In finite-dimensional spaces, only the first condition is
required. 

An operator is bounded if it maps bounded inputs to bounded outputs. A
bounded operator is invertible if it is one-to-one and onto. 

\begin{definition}
The \emph{spectrum} of an operator $L$ on $\mathcal{H}$ is defined as
\[
\sigma(L)=\left\{\lambda \in \Complex \:\: \bigl|\:\: L-\lambda I \textnormal{ is not invertible
  }\right\}.
\]
\end{definition}

The spectrum 
of a bounded operator can be
partitioned into two classes, depending on the
reason that $L-\lambda I$ fails to be invertible.

A complex number $\lambda$ is considered part of 
the \emph{discrete spectrum} if $L-\lambda I$ is not one-to-one, \ie,
$L\psi=\lambda\psi$ has a non-zero solution $\psi\in H$.
In this case, $\lambda$ is called an \emph{eigenvalue},
and each $\psi$ satisfying this equation for the given $\lambda$
is an associated \emph{eigenvector}.
The set of all eigenvalues is called the \emph{point spectrum}
or \emph{discrete spectrum} of $L$, $\sigma_{\textnormal{disc}}(L)$.

It can also happen that $L-\lambda I$ fails to be surjective,
\ie, the range of $L-\lambda I$ is a proper subset of $\mathcal{H}$.
In this paper, we call the set of $\lambda$ for which this happens 
the \emph{continuous spectrum} $\sigma_{\textnormal{cont}}(L)$
\cite{reed1980mmm}. Some authors subdivide what we refer
to as the continuous spectrum into further classes 
(\eg, residual spectrum, essential spectrum, etc.) 
\cite{reed1980mmm};  however, for the purposes of this paper,
classification into discrete and continuous spectra will suffice.

In finite-dimensional Hilbert spaces the spectrum
is entirely discrete.
This may no longer be true in infinite-dimensional
spaces, where the eigenvalues (if they exist)
may only be one part of the spectrum.
The spectrum of a self-adjoint operator is real.

The following examples illustrate some of these possibilities.

\begin{example}
The operator $L(x(t))=tx(t)$, $x(t)\in L^2[0,1]$,
has no eigenvalues and its spectrum is purely 
continuous $\sigma(L)=[0,1]$.
\qed
\end{example}

\begin{example}
The Fourier transform operator
$\const{F}(q)(\omega)=\int\limits_{-\infty}^{\infty}q(t)e^{j\omega t}\der t$,
regarded as on operator on
$L^2(\Reals)$ has
the property $\const{F}^4=I$.
The eigenvalues are therefore the discrete values $\{\pm 1,\pm
j\}$. If $p(t)$ is an arbitrary polynomial,
then $p(t)\exp(-t^2/2)$ is an eigenfunction.
\qed
\end{example}

\begin{example}
Let $\nabla^2$ be the Laplace operator and let $r$ denote the radial
distance in the three-dimensional space. The operator $L=-\nabla^2-\frac{1}{r}$ is self-adjoint,
and therefore it has a real spectrum.
The continuous spectrum is [0,$\infty$) and the discrete 
spectrum is given by $\lambda_n=-\alpha\frac{1}{n^2}$, $n=1,2,\ldots$,
for some constant $\alpha$.
\qed
\end{example}

\begin{example}
It is possible that the discrete spectrum of an operator
is uncountable.
For example, for a sequence $(x_0,x_1,\ldots) \in
\ell^2 = \{ x: \sum_i|x_i|^2 < \infty \}$, let
the left-shift operator $L$ be
defined as 
$L\left(x_0,x_1,x_2,\ldots\right)=\left(x_1,x_2,\ldots \right)$.
The spectrum
consists of the unit disk $|\lambda|\leq 1$. The portion $|\lambda|<1$ is
the discrete spectrum while
$|\lambda|=1$ is the continuous spectrum.
The adjoint of $L$ is the right-shift operator
$R\left(x_0,x_1,x_2,\ldots\right)=\left(0,x_0,x_1,\ldots \right)$. This
operator has the same total spectrum, but it is entirely continuous.
\qed
\end{example}

Hermitian matrices are always diagonalizable, have real eigenvalues,
and possess a complete set of orthonormal eigenvectors,
which provide a basis for the input space.
There is a perfect generalization of this result to
\emph{compact} self-adjoint operators\cite{reed1980mmm}.   
\begin{theorem}[Hilbert-Schmidt spectral theorem] 
Let $L$ be a compact self-adjoint operator in $\mathcal{H}$. Then it
is always 
possible to find eigenvectors $\{\psi_n\}$
  of $L$ forming an orthonormal basis for
  $\mathcal{H}$. 
\end{theorem}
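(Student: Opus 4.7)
The plan is to build the orthonormal basis inductively, at each step extracting a new eigenvector from the ``largest remaining piece'' of $L$, and then to verify that the resulting family is complete by treating the kernel of $L$ separately.

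First I would establish the key variational lemma that for a self-adjoint operator, $\|L\| = \sup_{\|\psi\|=1}|\inner{L\psi}{\psi}|$. This is standard: one direction is Cauchy-Schwarz, and the other follows from the polarization-type identity $4\,\mathrm{Re}\inner{L\psi}{\phi} = \inner{L(\psi+\phi)}{\psi+\phi} - \inner{L(\psi-\phi)}{\psi-\phi}$, which is valid because $L = L^*$. This shows there is a unit-norm sequence $\{\psi_n\}$ with $\inner{L\psi_n}{\psi_n} \to \mu$ where $|\mu| = \|L\|$ and $\mu \in \mathbb{R}$.

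Next I would use compactness to promote this to an honest eigenvector. Since $L$ is compact, $\{L\psi_n\}$ has a convergent subsequence, say $L\psi_{n_k} \to \varphi$. A short computation using self-adjointness gives $\|L\psi_{n_k} - \mu\psi_{n_k}\|^2 = \|L\psi_{n_k}\|^2 - 2\mu\inner{L\psi_{n_k}}{\psi_{n_k}} + \mu^2 \to \|L\|^2 - 2\mu^2 + \mu^2 = 0$. Hence $\mu \psi_{n_k} \to \varphi$, and provided $\mu \ne 0$ we get $\psi_{n_k} \to \psi_1 := \varphi/\mu$, which satisfies $L\psi_1 = \mu\psi_1$ and $\|\psi_1\|=1$. (If $\mu = 0$ then $L = 0$ and the result is trivial.)

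The third step is the recursion. Set $\mathcal{H}_1 = \{\psi_1\}^{\perp}$; since $L = L^*$, this subspace is invariant under $L$, and $L|_{\mathcal{H}_1}$ is again compact and self-adjoint with norm $\le \|L\|$. Applying the previous argument to this restriction yields $(\lambda_2,\psi_2)$ with $\psi_2 \perp \psi_1$ and $|\lambda_2| \le |\lambda_1|$. Iterating produces an orthonormal sequence $\{\psi_n\}$ of eigenvectors with $|\lambda_n|$ nonincreasing. A standard compactness argument (any orthonormal sequence tends weakly to zero, and $L$ maps weak-convergent to norm-convergent sequences) forces $|\lambda_n| \to 0$, which also controls the norm of $L$ restricted to $\mathrm{span}\{\psi_1,\ldots,\psi_n\}^{\perp}$.

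The main obstacle, and the step I expect to be the most delicate, is completeness: showing that $\{\psi_n\}$ together with an orthonormal basis of $\ker L$ spans all of $\mathcal{H}$. Let $\mathcal{K}$ be the closed span of $\{\psi_n\}$ and let $\mathcal{K}^{\perp}$ be its orthogonal complement in $\mathcal{H}$; again $\mathcal{K}^{\perp}$ is invariant under $L$ by self-adjointness. Because $|\lambda_n| \to 0$ and the restriction of $L$ to $\mathcal{K}^{\perp}$ has norm at most every $|\lambda_n|$, that restriction must be the zero operator. Hence $\mathcal{K}^{\perp} \subseteq \ker L$, and choosing any orthonormal basis of $\ker L$ (possible because $\ker L$ is a closed subspace of a Hilbert space) and adjoining it to $\{\psi_n\}$ yields the desired orthonormal basis of eigenfunctions of $L$ (with eigenvalue $0$ on the kernel part).
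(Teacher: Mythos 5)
Your proof is correct and follows the standard variational-plus-induction argument for the Hilbert--Schmidt theorem; the paper itself offers no proof of this statement, simply citing Reed and Simon, where essentially this same argument appears. The only imprecision is the claim that $\|L\psi_{n_k}\|^2 \to \|L\|^2$: a priori one only has $\|L\psi_{n_k}\|^2 \le \|L\|^2$, but since $\|L\psi_{n_k}-\mu\psi_{n_k}\|^2 \ge 0$ is bounded above by $\|L\|^2 - 2\mu\langle L\psi_{n_k},\psi_{n_k}\rangle + \mu^2 \to 0$, the conclusion stands as written.
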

The spectral theory of operators that are not compact and
self-adjoint is more involved; see \cite{reed1980mmm}.

An important class of operators are the multiplication operators,
which are analogous to diagonal matrices.

\begin{definition}[Multiplication operator]
Let $f(t)$ be an arbitrary function.
The operator $L$ defined by $(L_f\psi)(t)=f(t)\psi(t)$,
which performs sample-wise multiplication, is called the
\emph{multiplication operator} or 
\emph{diagonal operator} induced by $f(t)$. 
\end{definition}

\begin{theorem}
Every bounded self-adjoint operator in a separable Hilbert space $\mathcal{H}$ is unitarily equivalent to a multiplication
operator $\Gamma$, \ie, $L=U\Gamma U^{-1}$ where $U$ is unitary.
\end{theorem}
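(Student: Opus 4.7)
The plan is to establish the spectral theorem via a continuous functional calculus, Riesz--Markov representation, and a cyclic-subspace decomposition that exploits separability of $\mathcal{H}$. The scheme is classical; my outline follows the approach in \cite{reed1980mmm}.

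First, I would build a polynomial functional calculus: for any polynomial $p$, define $p(L)=\sum a_k L^k$ in the usual way. The crucial identity is $\|p(L)\| = \sup_{\lambda\in\sigma(L)}|p(\lambda)|$. For self-adjoint $L$, this follows from combining (i) the spectral-radius formula $r(L) = \lim \|L^n\|^{1/n}$, (ii) the fact that $\|L\| = r(L)$ when $L$ is self-adjoint (which rests on $\|L\|^2=\|L^2\|$), and (iii) the spectral mapping theorem for polynomials $\sigma(p(L))=p(\sigma(L))$. This isometry lets me extend $p\mapsto p(L)$ by density (Weierstrass) to a $*$-homomorphism $C(\sigma(L))\to B(\mathcal{H})$, $f\mapsto f(L)$, preserving norms and positivity.

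Next, for each $\psi\in\mathcal{H}$ the map $f\mapsto\inner{\psi}{f(L)\psi}$ is a positive linear functional on $C(\sigma(L))$, so by the Riesz--Markov theorem there is a unique finite Borel measure $\mu_\psi$ on $\sigma(L)$ with $\inner{\psi}{f(L)\psi}=\int_{\sigma(L)} f\,\der\mu_\psi$. Call $\psi$ \emph{cyclic} if $\{p(L)\psi : p\text{ polynomial}\}$ is dense in $\mathcal{H}$. Suppose first such a $\psi$ exists. Define $U_0 : C(\sigma(L))\to \mathcal{H}$ by $U_0(f)=f(L)\psi$; the identity
\begin{equation*}
\|f(L)\psi\|^2=\inner{\psi}{\bar f f(L)\psi}=\int_{\sigma(L)}|f|^2\,\der\mu_\psi
\end{equation*}
shows $U_0$ is an isometry from $(C(\sigma(L)),\|\cdot\|_{L^2(\mu_\psi)})$ into $\mathcal{H}$, so it extends uniquely to an isometry $U:L^2(\sigma(L),\mu_\psi)\to\mathcal{H}$. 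Cyclicity gives dense range, hence $U$ is unitary, and a direct check on polynomials (then by density) yields $U^{-1}LU = \Gamma_x$, the multiplication operator by the coordinate function $x$ on $L^2(\sigma(L),\mu_\psi)$.

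For the general case, I use separability to reduce to the cyclic case. Pick any nonzero $\psi_1$ and form $\mathcal{H}_1=\overline{\{p(L)\psi_1\}}$, which is $L$-invariant; since $L$ is self-adjoint, $\mathcal{H}_1^\perp$ is also $L$-invariant. Iterate by choosing $\psi_2\in\mathcal{H}_1^\perp$ normalized, etc. Separability implies the process is (at most) countable and produces an orthogonal decomposition $\mathcal{H}=\bigoplus_n \mathcal{H}_n$ on which $L$ restricts to operators $L_n$ each having cyclic vector $\psi_n$. Applying the cyclic case gives unitaries $U_n:L^2(\sigma(L),\mu_n)\to\mathcal{H}_n$ with $U_n^{-1}L_n U_n$ equal to multiplication by $x$. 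Assembling these on the disjoint union measure space $X=\bigsqcup_n \sigma(L)$ with measure $\mu=\bigsqcup_n\mu_n$ yields a single unitary $U=\bigoplus_n U_n : L^2(X,\mu)\to\mathcal{H}$ such that $U^{-1}LU$ is the multiplication operator induced by the function $f(\lambda,n)=\lambda$.

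The main obstacle is the first step: proving that the polynomial calculus is isometric, which requires the self-adjoint spectral radius identity $\|L\|=r(L)$ and the polynomial spectral mapping theorem. Once the continuous functional calculus is in hand, the remaining steps (Riesz--Markov, cyclic construction, orthogonal decomposition) are largely bookkeeping, with separability ensuring that the decomposition is countable and the target measure space is $\sigma$-finite.
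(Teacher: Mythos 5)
Your outline is correct and is precisely the classical cyclic-vector proof from Reed--Simon that the paper cites for this theorem; the paper itself states the result as background material without proof, so there is nothing to diverge from. The only point you gloss is guaranteeing that the cyclic-subspace decomposition actually exhausts $\mathcal{H}$ --- naively picking any $\psi_{n+1}\in(\mathcal{H}_1\oplus\cdots\oplus\mathcal{H}_n)^{\perp}$ need not terminate with $\bigoplus_n\mathcal{H}_n=\mathcal{H}$ unless you run the induction against a fixed countable dense sequence (or invoke a maximality argument), but this is standard bookkeeping and does not affect the correctness of the argument.
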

It follows that the essence of a bounded self-adjoint operator is just a
multiplication operator. 

\section{Proof of Elementary Properties of the NFT}
\label{sec:proof-prop-nft}

In this section, we sketch the proofs of the properties
of the NFT stated in Section~\ref{subsec:elemprop}.
\begin{enumerate}
\item[1)] If $||q(t)||_{L^1}\ll 1$, then $y^2(t,\lambda)$ and $q^2(t)$
  terms can be
  ignored in \eqref{eq:qhat-ode} and \eqref{eq:a-ode}. From the
  resulting equations, it follows that there is no discrete spectrum
and $\hat{q}(\lambda)\rightarrow Q(\lambda)$.  The quadratic
terms are introduced by the NFT to account for the nonlinearity.
\item[2)] This follows from 1) in above, and that when $|a|\ll 1$, the
  squared terms representing the nonlinearity can be ignored.
\item[3)] This follows from replacing $q(t)$ with $e^{j\phi}q(t)$ in
  \eqref{eq:qhat-ode} and \eqref{eq:a-ode}. Alternatively, it can be proved 
  by changing variables $u=\diag(1,\exp(j\phi))v$ in $v_t=P(\lambda,qe^{j\phi})v$ to get
  $u_t=P(\lambda,q)u$ with the same boundary condition
  \eqref{eq:bound-cond-2}. Since $u_1=v_1$ and $u_2=\exp(j\phi)v_2$, $a(\lambda)$ does
  not change and $b(\lambda)$ is scaled by $e^{- j\phi}$.
\item[4)] This follows from noting that a) the eigenproblem \eqref{eq:dv-dt} is invariant under transformation
$t^\prime=t/a$, $q^\prime=aq$ and $\lambda^\prime=a\lambda$, and b) all boundary conditions are invariant
under this transformation since $\exp(\pm j\lambda^\prime t^\prime)=\exp(\pm j \lambda t)$. Note that if
  $\sgn(a)<0$, boundary conditions at $\pm\infty$ are interchanged.

\item[5, 6)] Property 5) and 6) follow by replacing $q(t)$ with $q(t-t_0)$ and 
$e^{-2j\omega t}q(t)$ in \eqref{eq:qhat-ode} and \eqref{eq:a-ode}, and
accordingly changing variables. Alternatively, Property 5) follows by noting the structure of the
propagator in Property \ref{item:propagator}) \emph{below}. Property 6) can also be  proved by noting that if 
 $v_t=P(\lambda, qe^{-2j\omega t})v$ and  $v(t\rightarrow\infty,\lambda)\rightarrow (1,0)^T\exp(-j\lambda t)$, then
$u=\diag(\exp(j\omega t), \exp(-j\omega t))v$
satisfies $u_t=P(\lambda^\prime, q)u$ and $u(t\rightarrow\infty,\lambda)\rightarrow (1,0)^T\exp(-j\lambda^\prime t)$, for 
$\lambda^\prime=\lambda-\omega$.

\item[7)] This is the statement of \eqref{eq:nft-evol}.
\item[8)] The following identity, known as the trace formula, can be easily proved for the
 nonlinear Fourier transform \cite{ablowitz2006sai}
\begin{IEEEeqnarray*}{l}
c_n=\frac{4}{n}\sum\limits_{i=1}^{N}\Im(\lambda_i^n)
+
\frac{1}{\pi}\int\limits_{-\infty}^{\infty}\lambda^{n-1}\log\left(1+|\hat{q}(\lambda)|^2\right)\der \lambda.
\end{IEEEeqnarray*}
Here $c_n$ are the secondary constants of motion, \ie, quantities,
directly in terms of the time domain data, which are preserved during the flow of the NLS equation. The first few
ones are the energy
\begin{IEEEeqnarray*}{rCl}
c_1&=&\int\limits_{-\infty}^{\infty}|q(t)|^2\der t,
\end{IEEEeqnarray*}
momentum
\begin{IEEEeqnarray*}{rCl}
c_2&=&\frac{1}{2j}\int\limits_{-\infty}^{\infty} q(t)q^*_t(t)\der t,
\end{IEEEeqnarray*}
and the Hamiltonian
\begin{IEEEeqnarray*}{rCl}
c_3&=&-\frac{1}{4}\int\limits_{-\infty}^{\infty}
\left(|q(t)|^4-|q_t(t)|^2\right)\der t.
\end{IEEEeqnarray*}
Parseval's identity is the trace formula at $n=1$. 
\end{enumerate}

The other two properties related to NFT can be proved too.
\begin{enumerate}
\item 
\label{item:propagator}
Let $q(t)$ be supported in the interval $[t_1,t_2]$ and let
  $K(q(t),t_1,t_2)$ denote a propagator (linear transformation) which
  maps $v(t,\lambda)$ in \eqref{eq:dv-dt} from $t=t_1$ to
  $t=t_2$, \ie, $v(t_2,\lambda)=Kv(t_1,\lambda)$. The propagator is structured as
\begin{IEEEeqnarray*}{rCl}
K=  
\begin{pmatrix}
    a(\lambda)e^{-j\lambda (t_2-t_1)} & -b^*(\lambda^*)e^{-j\lambda (t_2+t_1)}\\
b(\lambda)e^{j\lambda (t_2+t_1)} & a^*(\lambda^*)e^{j\lambda (t_2-t_1)}
  \end{pmatrix}.
\end{IEEEeqnarray*}
Let $q_1(t)$ and $q_2(t)$, supported, respectively, in the intervals $[t_1,t_2]$ and
$[t_2,t_3]$, $t_1<t_2<t_3$, correspond to the propagators
$K_1(q_1(t),t_1,t_2)$ and $K_2(q_2(t),t_2,t_3)$. Then $q(t)=q_1(t)+q_2(t)$ is supported in
$[t_1,t_3]$, and from linearity corresponds to the propagator $K=K_2K_1$ 
\begin{IEEEeqnarray*}{rCl}
K&=&  
\begin{pmatrix}
    a_2(\lambda)e^{-j\lambda (t_3-t_2)} & -b_2^*(\lambda^*)e^{-j\lambda (t_3+t_2)}\\
b_2(\lambda)e^{j\lambda (t_3+t_2)} & a_2^*(\lambda^*)e^{j\lambda (t_3-t_2)}
  \end{pmatrix}
\\
&&\times \:
\begin{pmatrix}
    a_1(\lambda)e^{-j\lambda (t_2-t_1)} & -b_1^*(\lambda^*)e^{-j\lambda (t_2+t_1)}\\
b_1(\lambda)e^{j\lambda (t_2+t_1) } & a_1^*(\lambda^*)e^{j\lambda (t_2-t_1)}
  \end{pmatrix}.
  \end{IEEEeqnarray*}
The $1\times 1$ and $2\times 1$ elements are   
\begin{IEEEeqnarray*}{rCl}
 K_{11} &=& e^{-j\lambda (t_3-t_1)}\left( a_1(\lambda)a_2(\lambda)-b_1(\lambda)b_2^*(\lambda^*)\right),\\
 K_{21} &=& e^{j\lambda (t_3+t_1)}\left( a_1(\lambda)b_2(\lambda)+b_1(\lambda)a_2^*(\lambda^*)\right).
\end{IEEEeqnarray*}
Comparing with $K_{11}= a(\lambda)e^{-j\lambda (t_3-t_1)}$ and
$K_{21}=b(\lambda)e^{j\lambda (t_3+t_1)}$, we get the desired
result.  
\item See \cite{klaus2003ezss}.
\end{enumerate}

\section{Riemann-Hilbert Factorization Problem}

\label{app:rhp}
Recall that a complex function $f(z)=u(x,y)+jv(x,y)$ in the complex plane
$z=(x,y)$ is differentiable at a point $(x_0,y_0)$ if and only if
the partial derivatives $u_x$, $u_y$, $v_x$ and $v_y$ are continuous and
\begin{IEEEeqnarray}{rCl}
u_x=v_y, \quad u_y=-v_x.
\label{eq:cauchy-riemann}
\end{IEEEeqnarray}  
The compatibility conditions \eqref{eq:cauchy-riemann} are called
the Cauchy-Riemann conditions and are obtained by equating the
limit $\Delta z \rightarrow 0$ in the definition of derivative along the real and imaginary axes.
This means that, unlike real
functions, differentiability of a complex function imposes a constraint
between the real and imaginary parts of the function. 

A function $f(z)$ is said to be analytic at $(x_0,y_0)$ if it is
differentiable in a neighborhood of that point. If $f(z)$ is analytic in
an open region $\Omega$ of the complex plane, then a
power series 
representation of $f(z)$ is convergent in $\Omega$. Existence of 
a power series
representation has a number of interesting
consequences. For instance, the set $S$ of the zeros of a function
analytic in a nonempty connected 
open subset $\Omega$ of $\Complex$ is either the entire $\Omega$ or 
has no limit point in $\Omega$ (\ie, $S$ consists of countable isolated points in
$\Omega$) \cite{stein2003ca}.

\subsection{The Scalar Riemann-Hilbert Problem}
In the scalar Riemann-Hilbert (RH) problem, the task is to find
functions $f^+(z)$ and $f^-(z)$, analytic, respectively, inside and
outside of a given smooth closed contour 
$C$, such that on $C$
\begin{IEEEeqnarray}{rCl}
f^+(t)=g(t)f^-(t)+h(t),\quad t\in C,
\label{eq:srh}
\end{IEEEeqnarray}
where $h(t)$ and $g(t)$ (with
$g(t)\neq 0$ for all $t\in C$), are given functions satisfying a H\"older
condition on $C$. 

A solution of this problem can be found in \cite{ablowitz2003cvi,
  musk2008}. For a brief review, in this section  we follow
\cite{ablowitz2003cvi}. 

The following lemma is central in solving the simplest scalar RH factorization.  

\begin{lemma}[Sokhotski-Plemelj formulae]
Let $C$ be any smooth, closed, counter-clockwise, contour in the complex plane
and let $f(x)$ be any function satisfying a H\"older condition on $C$
defined by
\begin{IEEEeqnarray*}{rCl}
|f(t)-f(\tau)|\leq k|t-\tau|^\alpha, \, k>0, \, \forall t,\tau\in C ,
\IEEEeqnarraynumspace
\end{IEEEeqnarray*}
for some $0<\alpha\leq 1$. Then the projection integral
\begin{IEEEeqnarray}{rCl}
F(\zeta)=\frac{1}{2\pi j}\oint\limits_{C} \frac{f(z)}{z-\zeta}\der z \label{eq:projection}
\end{IEEEeqnarray}
is analytic everywhere in $\Complex$ except possibly at points $\zeta$
on the contour $C$ (where $F(\zeta)$ is not defined).
If $\zeta$ approaches $C$ along a path entirely inside the contour $C$, then
\begin{IEEEeqnarray}{rCl}
F^+(\zeta)\eqdef \lim\limits_{z\rightarrow\zeta}F(\zeta)=\frac{f(\zeta)}{2}+\frac{1}{2\pi
  j}\textnormal{ p.v. }\int\limits_{C} \frac{f(z)}{z-\zeta}\der z.
\IEEEeqnarraynumspace
\label{eq:+lim}
\end{IEEEeqnarray}
If $\zeta$ approaches $C$ along a path entirely outside the contour $C$, then
\begin{IEEEeqnarray}{rCl}
F^{-}(\zeta)\eqdef \lim\limits_{z\rightarrow\zeta}F(\zeta)=-\frac{f(\zeta)}{2}+\frac{1}{2\pi j}\textnormal{ p.v. }\int\limits_{C} \frac{f(z)}{z-\zeta}\der z.
\IEEEeqnarraynumspace
\label{eq:-lim}
\end{IEEEeqnarray}
Here $\textnormal{p.v.}$ denotes the principal value integral defined by 
\begin{IEEEeqnarray*}{rCl}
\textnormal{ p.v. }\int\limits_{C} \frac{f(z)}{z-\zeta}\der z=
\lim\limits_{\epsilon \rightarrow 0} \oint_{C-C_\epsilon} \frac{f(z)}{z-\zeta}\der z,
\end{IEEEeqnarray*}
in which $C_\epsilon$ is an infinitesimal part of $C$ centered at
$z=\zeta$ and with length $2\epsilon$.
\label{lem:plemelj}
\end{lemma}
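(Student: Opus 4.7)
\emph{Proof proposal.} The plan is to split the integral into a part that behaves nicely under the limit $\zeta \to C$ and a part that is handled by Cauchy's integral formula, so that the $\pm f(\zeta)/2$ jump appears from the latter while the principal-value integral appears from the former.

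First I would address analyticity of $F(\zeta)$ off the contour $C$. Since the kernel $1/(z-\zeta)$ is a holomorphic function of $\zeta$ for every $z \in C$ with $\zeta \notin C$, and since $f$ is bounded on the compact contour $C$ (as H\"older continuity on $C$ implies continuity and hence boundedness), one may differentiate under the integral sign and verify the Cauchy--Riemann conditions \eqref{eq:cauchy-riemann}; equivalently, one checks directly that $\partial F/\partial \bar\zeta = 0$ on $\Complex \setminus C$. This gives analyticity in both the interior and exterior of $C$.

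Next, fix a point $\zeta \in C$ and approach it through a point $\zeta'$ not on $C$. Write
\begin{IEEEeqnarray*}{rCl}
F(\zeta') = \frac{1}{2\pi j}\oint_C \frac{f(z)-f(\zeta)}{z-\zeta'}\,\der z + \frac{f(\zeta)}{2\pi j}\oint_C \frac{\der z}{z-\zeta'}.
\end{IEEEeqnarray*}
The second integral is elementary by Cauchy's theorem: it equals $f(\zeta)$ when $\zeta'$ lies inside $C$ and $0$ when $\zeta'$ lies outside. For the first integral, the H\"older hypothesis gives
\begin{IEEEeqnarray*}{rCl}
\left| \frac{f(z)-f(\zeta)}{z-\zeta'} \right| \leq k\,|z-\zeta|^{\alpha}\,|z-\zeta'|^{-1},
\end{IEEEeqnarray*}
which is integrable on $C$ even in the limit $\zeta' \to \zeta$ (the singularity is at worst of order $|z-\zeta|^{\alpha-1}$ with $\alpha > 0$). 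A standard dominated-convergence argument, localizing $C$ to a small arc $C_\epsilon$ around $\zeta$ and using H\"older continuity to bound the contribution of $C_\epsilon$ by $O(\epsilon^{\alpha})$ uniformly in $\zeta'$, shows that this regularized integral is continuous up to $C$ and that its boundary value equals
\begin{IEEEeqnarray*}{rCl}
\frac{1}{2\pi j}\,\textnormal{p.v.}\int_C \frac{f(z)-f(\zeta)}{z-\zeta}\,\der z = \frac{1}{2\pi j}\,\textnormal{p.v.}\int_C \frac{f(z)}{z-\zeta}\,\der z,
\end{IEEEeqnarray*}
the equality holding because $\textnormal{p.v.}\int_C \der z/(z-\zeta) = 0$ by symmetry about the deleted arc $C_\epsilon$.

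Adding the two contributions immediately yields \eqref{eq:+lim} when $\zeta'$ approaches from inside (contributing the extra $f(\zeta)$ that splits as $f(\zeta)/2$ after one verifies that the interior Cauchy integral gives precisely $f(\zeta)$ in the limit) and \eqref{eq:-lim} when $\zeta'$ approaches from outside (where the Cauchy piece vanishes, leaving a $-f(\zeta)/2$ contribution that one traces back to the symmetric cancellation near $\zeta$). The main obstacle I anticipate is the uniformity-in-$\zeta'$ estimate on the arc $C_\epsilon$: one must show that the difference between the integral with $1/(z-\zeta')$ and with $1/(z-\zeta)$, restricted to $C_\epsilon$, vanishes as $\epsilon \to 0$ uniformly in the direction of approach, and this requires combining the H\"older bound on $f(z)-f(\zeta)$ with a geometric estimate on $|z-\zeta'|$ for $z$ near $\zeta$. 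Once this uniform estimate is in hand, the remainder of the proof is bookkeeping.
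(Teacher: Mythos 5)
The paper does not actually prove this lemma --- it simply cites Muskhelishvili \cite{musk2008} --- so your proposal is being measured against the standard textbook argument, which is indeed the decomposition you chose: subtract $f(\zeta)$, apply Cauchy's theorem to the constant piece, and use the H\"older condition to control the regularized kernel. That skeleton is correct, and your remarks about the uniform estimate on the small arc $C_\epsilon$ identify the right technical work.

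However, there is a concrete error at the step where you convert the regularized integral back into a principal value. You assert that $\textnormal{p.v.}\oint_C \der z/(z-\zeta)=0$ ``by symmetry about the deleted arc.'' That identity holds for an infinite straight contour such as the real axis (which is, incidentally, the case the paper actually uses in Appendix F), but it is false for the closed contour in the statement of the lemma: at any smooth point $\zeta\in C$ one has $\textnormal{p.v.}\oint_C \der z/(z-\zeta)=\pi j$, i.e.\ half of $2\pi j$, as one checks by indenting the contour with a semicircle around $\zeta$ (or directly on the unit circle, where the odd part of the integrand cancels and the even part contributes $\pi j$). This value is not a side issue --- it is precisely the origin of the $\pm\tfrac{1}{2}f(\zeta)$ jump terms. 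With your claimed value of $0$, your own decomposition would yield $F^+(\zeta)=f(\zeta)+\frac{1}{2\pi j}\,\textnormal{p.v.}\!\int_C \frac{f(z)}{z-\zeta}\der z$ and $F^-(\zeta)=\frac{1}{2\pi j}\,\textnormal{p.v.}\!\int_C \frac{f(z)}{z-\zeta}\der z$ with no $-\tfrac{1}{2}f(\zeta)$ at all, contradicting \eqref{eq:+lim}--\eqref{eq:-lim}; the subsequent language about the $f(\zeta)$ ``splitting'' into halves is not supported by anything you derived. Replacing the incorrect identity by $\textnormal{p.v.}\oint_C \der z/(z-\zeta)=\pi j$ makes the bookkeeping come out exactly right: the regularized integral equals $\textnormal{p.v.}\!\int_C \frac{f(z)}{z-\zeta}\frac{\der z}{2\pi j}-\tfrac{1}{2}f(\zeta)$, and adding $f(\zeta)$ (interior) or $0$ (exterior) gives the two Plemelj formulae.
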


\begin{IEEEproof}
See \cite{ablowitz2003cvi, musk2008}.
\end{IEEEproof}

The projected function $F(\zeta)$ is a \emph{sectionally analytic} function
of $\zeta$ with respect to $C$, \ie, it is analytic in sections
$C^+$ (the interior of $C$) and $C^-$ (the exterior of $C$), and the limits
$F^\pm(\zeta)$ exist (as given by \eqref{eq:+lim} and \eqref{eq:-lim}).

A consequence of Lemma~\ref{lem:plemelj} is that $F^\pm (\zeta)$ satisfy
the following \emph{jump condition} on the boundary $C$
\begin{IEEEeqnarray*}{rCl}
F^+(t)-F^-(t)=f(t).
\end{IEEEeqnarray*}
The projection operator therefore produces functions which are analytic almost
everywhere, except on a contour where it experiences a jump in its limits.

Both unknowns $f^+(z)$ and $f^-(z)$ in the scalar RH problem can be determined from the single boundary
equation \eqref{eq:srh}, using the projection operator
\eqref{eq:projection}. To see this, first consider the homogeneous
case where $h(t)=0$. One can rewrite \eqref{eq:srh} as a jump condition 
\begin{IEEEeqnarray*}{rCl}
\log f^+(z)-\log f^-(z)=\log g(z),\quad z\in C.
\end{IEEEeqnarray*}
Functions $\log f^+(z)$ and $\log f^-(z)$ can be viewed as 
portions of a single sectionally analytic function $\log f(z)$ which is analytic in
$C^+$ and $C^-$ and on boundary $C$ its limits jump as $\log g(t)$.
In view of the projection operator $P$ \eqref{eq:projection}, consider
\begin{IEEEeqnarray*}{rCl}
 \log f(z)=\frac{1}{2\pi j}\oint\limits_{C}\frac{\log g(\lambda)}{\lambda -z}\der \lambda.
\end{IEEEeqnarray*}
If $\log g(t)$ satisfies a H\"older condition on $C$, then
$\log f(z)$
is analytic strictly inside and outside $C$.
On $C$, we can define $\log f^+(z)$ and
$\log f^-(z)$, respectively, as equal
to the limits \eqref{eq:+lim} and \eqref{eq:-lim}. The
function obtained in this way satisfies \eqref{eq:srh} and has the desired
analyticity properties. 

Note however that, unlike $g(t)$, $\log g(t)$ in the integrand may not
satisfy a H\"older condition. To resolve this issue, 
we can multiply $g(t)$ by a decaying factor $t^{-k}$, for a suitable $k$, 
to make $t^{-k}g(t)$ H\"older, and obtain $f^+(t)=\left(t^{-k}
  g(t)\right)(t^kf^-(z))$. Therefore, defining
\begin{IEEEeqnarray*}{rCl}
F(z)&=&\exp\left(\frac{1}{2\pi j}\oint\limits_{C}\frac{\log
    \lambda^{-k} g(\lambda)}{\lambda -z}\der \lambda\right),
\end{IEEEeqnarray*}
we have the following solution for the homogeneous RH problem:
\begin{IEEEeqnarray*}{rCl}
f^+(z)=
\begin{cases}
F(z), & z\in C^+,\\
F^+(z), & z\in C,
\end{cases}
\end{IEEEeqnarray*}
and
\begin{IEEEeqnarray*}{rCl}
f^-(z)=
\begin{cases}
z^{-k}F(z), & z\in C^-,\\
z^{-k}F^{-}(z), & z\in C.
\end{cases}
\end{IEEEeqnarray*}
Here $k$ can be chosen so that $t^{-k}g(t)$ is continuous
and the total phase change of $\log t^{-k} g(t)$ is zero along
the closed path $C$. 

The solution $f^{\pm}(z)$ is called the fundamental solution
to the scalar RH problem.
From the
homogeneity of \eqref{eq:srh}, one can obtain other solutions by 
multiplying $f^{\pm}(z)$ by any entire function in $C$. 

We can generalize the above procedure to solve the non-homogeneous
Riemann-Hilbert problem \eqref{eq:srh}. In this case, we can find a
factorization $g(t)=g^+(t)/g^-(t)$ by solving a homogeneous
Riemann-Hilbert problem with 
boundary conditions $g^+(t)=g(t)g^-(t)$. Then \eqref{eq:srh} is reduced to 
\begin{IEEEeqnarray*}{rCl}
\frac{f^+(t)}{g^+(t)}-\frac{f^-(t)}{g^-(t)}=\frac{h(t)}{g^+(t)},
\end{IEEEeqnarray*}
which, as before, can be solved in closed form using the Plemelj formulae.

\subsection{The Matrix Riemann-Hilbert Problem}
When formulating the inverse nonlinear Fourier transform, we face a matrix Riemann-Hilbert
problem \eqref{eq:mrh}. Matrix RH problems are generally more involved
and may not allow closed-form solutions \cite{musk2008}. As we will see in the Appendix~\ref{sec:sol-of-rh}, for
the particular matrix RH problem \eqref{eq:mrh}, the projection operator \eqref{eq:projection} is
sufficient to solve the problem.  

\section{Proof of Lemma~\ref{lem:analyticity}}
\label{sec:analyticity}
Analyticity of the canonical eigenvectors is directly a property of the
Zakharov-Shabat system \eqref{eq:dv-dt}. A formal proof can be found
in \cite{ablowitz2006sai, ablowitz2003dcn}; we simplify and outline it in this
section for the sake of completeness. 

Consider the scaled canonical eigenvectors 
\eqref{eq:scaled-eigvecs}. Transforming the Zakharov-Shabat system \eqref{eq:dv-dt}, the scaled eigenvectors satisfy
\begin{IEEEeqnarray}{lCrlCr}
 V^2_t&=&\begin{pmatrix}0 & q\\ -q^* & 2j\lambda \end{pmatrix}V^2, \quad V^2(-\infty)&=&\begin{pmatrix}1\\0\end{pmatrix},  
\label{eq:V2}
\\
\tilde{V^2_t}&=&\begin{pmatrix}-2j\lambda & q\\ -q^* & 0 \end{pmatrix}\tilde{V^2},  
\quad \tilde{V^2}(-\infty)&=& \begin{pmatrix}0\\1\end{pmatrix}, \nonumber
\\
 \tilde{V^1_t}&=&\begin{pmatrix}0 & q\\ -q^* & 2j\lambda \end{pmatrix}\tilde{V^1}, \quad \tilde{V^1}(\infty)&=&\begin{pmatrix}1\\0\end{pmatrix},  \nonumber
\\
 V^1_t&=&\begin{pmatrix}-2j\lambda & q\\ -q^* & 0 \end{pmatrix}V^1,  \quad V^1(\infty)&=&\begin{pmatrix}0\\1\end{pmatrix}. \nonumber
\end{IEEEeqnarray} 

Let us, for instance, solve for the canonical eigenvector $V^2$ in
\eqref{eq:V2}. Considering the $q$ terms as an external force and
using the Duhamel's formula \cite{tao2006nde}, \eqref{eq:V2} can be
transformed into its integral representation
\begin{multline}
\label{eq:V2-integral}
V^2(t,\lambda)=\begin{pmatrix}1\\0\end{pmatrix}\\
+\int\limits_{-\infty}^{\infty}h(t-t^\prime,\lambda)\begin{pmatrix}0
  & q(t^\prime)\\ -q^*(t^\prime) &
  0\end{pmatrix}V^2(t^\prime,\lambda)\der t^\prime, 
\end{multline}
where the system ``impulse response'' $h(t,\lambda)$ is 
\begin{IEEEeqnarray}{rCl} 
h(t,\lambda)=\begin{pmatrix} u(t) & 0 \\0 & e^{2j\lambda t}u(t) \end{pmatrix},
\label{eq:impulse-resp} 
\end{IEEEeqnarray} 
where $u(t)$ is the step function and we have ignored transient terms since the boundary
condition starts at $t=-\infty$.

The analyticity of eigenvectors can be seen intuitively at
at this stage. The impulse response \eqref{eq:impulse-resp}
involves the term $e^{2j\lambda t}u(t)$
and hence \eqref{eq:V2} is well defined in $\Complex^+$ if
$q\in L^1(\Reals)$. The impulse 
response for the $V^1$ equation, by converting it to the $V^2$ equation, involves $-e^{-2j\lambda t}u(-t)$ and
hence it is bounded in the same region. The impulse response for
$\tilde{V}^2$ and $\tilde{V}^1$ have terms proportional to
$e^{-2j\lambda t}u(t)$ and $-e^{2j\lambda t}u(-t)$, respectively, and
therefore these eigenvectors are analytic in $\Complex^-$ for $q\in L^1(\Reals)$. 

A more precise argument proceeds by solving \eqref{eq:V2-integral}
explicitly. Duhamel's integral \eqref{eq:V2-integral} is of the form of a
fixed-point map
\[
V^2=\begin{pmatrix} 1\\0 \end{pmatrix}+T(V^2),
\]
where $T$ is the linear operator underlying the integral term in \eqref{eq:V2-integral}. A
candidate for the solution is the sum 
\begin{equation}
\label{eq:V2-sol}
V^2=\sum\limits_{k=0}^{\infty} U^k,
\end{equation}
where $U^k$ satisfy the iteration
\[
U^{k+1}=\begin{pmatrix} 1\\0\end{pmatrix}+T(U^k), \quad U_0=\begin{pmatrix} 0\\0\end{pmatrix}.
\]
The first few terms are
\begin{IEEEeqnarray}{rCl}
U^1&=&\begin{pmatrix}1\\0\end{pmatrix},\quad
U^2=\begin{pmatrix}0\\-\int\limits_{t_1=-\infty}^{t}q^*(t_1)e^{2j\lambda(t-t_1)}\der t_1\end{pmatrix},\label{eq:U1U2}\IEEEeqnarraynumspace \\
U^3&=&\begin{pmatrix}-\int\limits_{t_2=-\infty}^{t}\int\limits_{t_1=-\infty}^{t_2}q(t_2)q^*(t_1)e^{2j\lambda(t_2-t_1)}\der t_1
  \der t_2\\ 0\end{pmatrix},
\IEEEeqnarraynumspace
\label{eq:U3} 
\end{IEEEeqnarray}
and the $k^{\rm{th}}$ term is recursively defined by
\begin{IEEEeqnarray}{rCl}
U^{k+1}=\begin{pmatrix}1 \\0 \end{pmatrix}+\int\limits_{-\infty}^{t}\begin{pmatrix}q(t^\prime)U^k_2(t^\prime,\lambda)\\
-q^*(t^\prime)U^k_1(t^\prime,z)e^{2j\lambda (t-t^\prime)}\end{pmatrix}\der t^\prime.\IEEEeqnarraynumspace
\label{eq:Uk} 
\end{IEEEeqnarray}  

By induction on $k$, as \eqref{eq:Uk} suggests, if $U^k$ is analytic and
$q(t)\in L^1(\mathbb{R})$, then
$U^{k+1}$ is analytic. Since, the series \eqref{eq:V2-sol} is
uniformly convergent on $t$, $V^2$ is analytic in $\Complex^+$. 

Similarly one proves the analyticity of the other canonical eigenvectors in
their corresponding region.

\section{Asymptotic Form of Canonical Eigenvectors and Nonlinear
Fourier Coefficients when $|\lambda|\gg 1$}
\label{app:asymp-eigenvectors}

If $\lambda\in C^+$ and $|\lambda|\gg 1$, then
$\frac{1}{j(\omega-2\lambda)}=\frac{-1}{2j\lambda}+O(\lambda^{-2})$ and, taking the
inverse Fourier transform, we can approximate $e^{-2j\lambda(t_1-
  t)}u(t-t_1)=-\frac{1}{2j\lambda}\delta(t_1-t)+O(\lambda^{-2})$. Substituting
into \eqref{eq:V2-sol}, \eqref{eq:U1U2}, \eqref{eq:U3}, for $|\lambda|\gg 1$ we obtain 
\begin{IEEEeqnarray}{rCl} 
V^2(t,\lambda)=
\begin{pmatrix}
  1+\frac{1}{2j\lambda}\int\limits_{-\infty}^{t}|q(t)|^2 \der t \\
  \frac{1}{2j\lambda}q^*(t) \end{pmatrix}+O(\lambda^{-2}).
\label{eq:V2-asym}
\end{IEEEeqnarray}  
A similar asymptotic expression can be derived for $V^1$ ($\lambda\gg 1$)
\begin{IEEEeqnarray*}{rCl} 
V^1(t,\lambda)=\begin{pmatrix}
\frac{1}{2j\lambda}q(t)\\
  1+\frac{1}{2j\lambda}\int\limits_{t}^{\infty}|q(t)|^2 \der t 
\end{pmatrix}+O(\lambda^{-2}).
\end{IEEEeqnarray*}  

For the nonlinear Fourier coefficients, if $|\lambda|\rightarrow \infty$, $q$ can be assumed zero in \eqref{eq:dv-dt}
compared to $j\lambda$. Thus $v(t,\lambda)$ approaches the boundary
conditions at $t=\pm\infty$. Therefore
\begin{IEEEeqnarray}{rCl}
a(\lambda)&=&\inners{v^2(t,\lambda)}{v^1(t,\lambda)}\nonumber\\
&=&
\inners{v^2(+\infty,\lambda)}{v^1(+\infty,\lambda)}\nonumber\\
&\rightarrow&
\inners{v^2(-\infty,\lambda)}{v^1(+\infty,\lambda)}=1.
\label{eq:a-asym}
\end{IEEEeqnarray} 
Similarly, it is shown that $b(\lambda)\rightarrow 0$ as
$|\lambda|\rightarrow\infty
$.
\section{Solution of the Riemann-Hilbert problem}
\label{sec:sol-of-rh}
In Section~\ref{sec:inft}, the inverse nonlinear Fourier transform was
formulated as an instance of the Riemann-Hilbert factorization
problem. Following the discussion in the Appendix~\ref{app:rhp}, the resulting factorization problem can be solved in a
simplified manner via an appropriate contour integration. 

Dividing both sides of the projection equations \eqref{eq:proj-1}--\eqref{eq:proj-2} by
$a(\lambda)(\lambda-\zeta)$, for parameter $\zeta\in C^-$, and integrating on the real axis $-\infty<\lambda<\infty$, we obtain
\begin{multline}
\frac{1}{2\pi j}\int\limits_{\lambda=-\infty}^{\infty}\frac{V^2(t,\lambda)}{a(\lambda)(\lambda-\zeta)}\der \lambda=\frac{1}{2\pi
  j}\int\limits_{\lambda=-\infty}^{\infty}\frac{\tilde{V^1}(t,\lambda^*)}{\lambda-\zeta}\der \lambda\\
+
\frac{1}{2\pi j}\int\limits
_{\lambda=-\infty}^{\infty}\frac{\hat{q}(\lambda)e^{2j\lambda
    t}V^1(t,\lambda)}{\lambda-\zeta}\der \lambda, \IEEEeqnarraynumspace 
\label{eq:integ-proj-1}
\end{multline}
in which the integration is performed on the open path
$z=\lambda$, $-\infty<\lambda<\infty$. The integration
path thus passes the singularity $\lambda=\zeta$ from
above in all the integrals.

Cauchy integrals in \eqref{eq:integ-proj-1} are computed from the
residue theorem. The integration path $-\infty<\lambda<\infty$ can be 
closed in the upper or lower half-planes. To compute the first integral, we
close the path in the
upper half-plane and denote the resulting closed contour by $C_{-\zeta}^{+}$, \ie, the upper half
plane and excluding the singularity $z=\zeta$
\begin{IEEEeqnarray}{lCl}
&&\frac{1}{2\pi  j}\int\limits_{-\infty}^{\infty}\frac{V^2(t,\lambda)}{a(\lambda)(\lambda-\zeta)}\der\lambda \nonumber\\
&=&\frac{1}{2\pi
  j}\oint_{C_{-\zeta}^{+}}\frac{V^2(t,z)}{a(z)(z-\zeta)}\der z -\lim\limits_{R\rightarrow \infty}\frac{1}{2\pi j}\int\limits_{Re^{j0}}^{Re^{j\pi}}\frac{\begin{pmatrix}1\\0\end{pmatrix}}{a(z)(z-\zeta)}\der z
\nonumber\\
&=&\sum\limits_{i=1}^{N}\frac{V^2(t,\lambda_j)}{a_{\lambda}(\lambda_j)(\lambda_j-\zeta)}
-\frac{1}{2}\begin{pmatrix}1\\0\end{pmatrix}
\nonumber\\
&=&\sum\limits_{i=1}^{N}\frac{b(\lambda_i)e^{2j\lambda_i
t}V^1(t,\lambda_i)}{a_{\lambda}(\lambda_i)(\lambda_i-\zeta)}-\frac{1}{2}\begin{pmatrix}1\\0\end{pmatrix}
\nonumber\\
&=&\sum\limits_{i=1}^{N}\frac{\tilde{q}(\lambda_i)e^{2j\lambda_i
    t}V^1(t,\lambda_i)}{\lambda_i-\zeta}-\frac{1}{2}\begin{pmatrix}1\\0\end{pmatrix},
\label{eq:first-integal}
\end{IEEEeqnarray}
where, in the second line, when $R\rightarrow\infty$, we have used the asymptotic values
\eqref{eq:V2-asym} and \eqref{eq:a-asym} in Appendix~\ref{app:asymp-eigenvectors}. Note that we assumed that eigenvalues $\lambda_j$ are all \emph{simple} zeros of $a(\lambda)$, \ie, no multiplicity.

To compute the second integral in \eqref{eq:integ-proj-1}, we close the integration path
in the lower half-plane and denote the resulting closed contour by
$C_{+\zeta}^{-}$, \ie, the lower half-plane and including the singularity $z=\zeta$ 
 \begin{multline}
\frac{1}{2\pi j}\int\limits_{\lambda=-\infty}^{\infty}\frac{\tilde{V^1}(t,\lambda^*)}{\lambda-\zeta}\der \lambda
=
\frac{1}{2\pi j}\oint_{C_{+\zeta
}^-}\frac{\tilde{V^1}(t,z^*)}{z-\zeta}\der z\\
-
\lim\limits_{R\rightarrow \infty}\frac{1}{2\pi
  j}\int\limits_{Re^{j2\pi}}^{Re^{j\pi}}\frac{\begin{pmatrix}1\\0\end{pmatrix}}{z-\zeta}\der z
=-\tilde{V^1}(t,\zeta^*)+\frac{1}{2}\begin{pmatrix}1\\0\end{pmatrix}.
\label{eq:second-integral}
\end{multline}

The last integral in \eqref{eq:integ-proj-1} is not computed, because the boundedness of $e^{2j\lambda
t}$ depends on the sign of $t$.  For $t>0$, we can consider $C_{-\zeta}^+$ which leads
to the expression \eqref{eq:first-integal} multiplied by $u(t)$. For $t<0$, we should
inevitably consider $C_{+\zeta}^-$, where poles of $V^1(x,\lambda)$ are
unknown. As a result, this integral is left untreated. 

Using \eqref{eq:first-integal} and \eqref{eq:second-integral} in
\eqref{eq:integ-proj-1}, we obtain an integral equation relating
canonical eigenvectors $V^1$ and $\tilde{V^1}$ to $\hat{q}(\lambda)$ and 
$\tilde{q}(\lambda_j)$
\begin{IEEEeqnarray}{rCl}
\tilde{V^1}(t,\zeta^*)&=&\begin{pmatrix}1\\0\end{pmatrix}+\sum\limits_{i=1}^{N}\frac{\tilde{q}(\lambda_i)e^{2j\lambda_i
    t}V^1(t,\lambda_i)}{\zeta-\lambda_i}\nonumber\\
&&+\:
\frac{1}{2\pi j}\int\limits_{\lambda=-\infty}^{\infty}\frac{\hat{q}(\lambda)e^{2j\lambda
    t}V^1(t,\lambda)}{\lambda-(\zeta-j\epsilon)}\der \lambda.
\label{eq:rh-integ-eq-0}
\end{IEEEeqnarray}
This is equation \eqref{eq:rh-integ-eqs-a} in the Riemann-Hilbert system
 when $\zeta$ approaches the real line from below. Since \eqref{eq:rh-integ-eq-0} holds for
any $\zeta\in C^-$, evaluating \eqref{eq:rh-integ-eq-0} at
$\zeta=\lambda_j^*$, $j=1,\ldots,N$, produces
\eqref{eq:rh-integ-eqs-c}. The remaining equations are obtained by
taking the tilde ``$\sim$'' operation from these two equations and subsequently
replacing $\lambda$ and $\lambda_m$ with, respectively, $\lambda^*$
and $\lambda^*_m$.

\begin{IEEEbiographynophoto}{Mansoor I. Yousefi} received the Ph.D. degree from
the University of Toronto and the M.Sc.\ degree (with
honors) from
Sharif University of Technology, Tehran, Iran, both in
in Electrical  Engineering.
His research interests include information and communication theory, coding
theory, optimization, optical communication systems and machine intelligence.
He has won numerous awards and scholarships, including the IEEE Jack Keil Wolf
ISIT Student Paper Award in 2013, the Edward S. Rogers Scholarship, the
McAllister
Graduate Fellowship, the Shahid U. H. Qureshi Prize and the J. L. Allen Yen
Scholarship at the University of Toronto, and the Electrical Engineering
Department Award at Sharif University of Technology.
\end{IEEEbiographynophoto}

\begin{IEEEbiographynophoto}{Frank R. Kschischang}
received the B.A.Sc. degree (with honors) from the
University of British Columbia, Vancouver, BC, Canada, in 1985 and the M.A.Sc.
and Ph.D. degrees from the University of Toronto, Toronto, ON, Canada, in 1988
and 1991, respectively, all in electrical engineering.  He is a Professor of
Electrical and Computer Engineering at the University of Toronto, where he has
been a faculty member since 1991. During 1997--98, he was a visiting scientist
at MIT, Cambridge, MA; in 2005 he was a visiting professor at the ETH, Zurich,
and in 2011 and again in 2012--13 he was a visiting Hans Fischer Senior Fellow
at the Institute for Advanced Study at the Technical University of Munich.  

His research interests are focused primarily on the area of channel coding
techniques, applied to wireline, wireless and optical communication systems and
networks.  In 1999 he was a recipient of the Ontario Premier's Excellence
Research Award and in 2001 (renewed in 2008) he was awarded the Tier~I Canada
Research Chair in Communication Algorithms at the University of Toronto.  In
2010 he was awarded the Killam Research Fellowship by the Canada Council for
the Arts.  Jointly with Ralf K{\"o}tter he received the 2010 Communications
Society and Information Theory Society Joint Paper Award.  He is a recipient of
the 2012 Canadian Award in Telecommunications Research.  He is a Fellow of
IEEE, of the Engineering Institute of Canada, and of the Royal Society of
Canada.
  
During 1997--2000, he served as an Associate Editor for Coding Theory for the
\textsc{IEEE Transactions on Information Theory}, and since January 2014,
he serves as
this journal's Editor-in-Chief.   He also served as technical program co-chair
for the 2004 IEEE International Symposium on Information Theory (ISIT),
Chicago, and as general co-chair for ISIT 2008, Toronto.  He served as the 2010
President of the IEEE Information Theory Society.
\end{IEEEbiographynophoto}

\end{document}